 \newtheorem{theorem}{Theorem}
 \newtheorem{lemma}{Lemma}
 \newtheorem{claim}{Claim}
\DeclareFontShape{OT1}{cmr}{bx}{sc}{<-> cmbcsc10}{}
\newcommand{\cgst}{\textsc{congest}\xspace}
\newcommand{\local}{\textsc{local}\xspace}
\newcommand{\clq}{\textsc{clique}\xspace}
\newcommand{\kc}{$k$-center\xspace}
\newcommand{\id}{id\xspace}
\newcommand{\len}{\ell}
\newcommand{\bin}{\textnormal{bin}}
\newcommand{\dist}{\textnormal{d}}
\newcommand{\optOne}{\textnormal{OPT}_1}
\newcommand{\opt}{\textnormal{OPT}_k}
\newcommand{\alg}{\textnormal{ALG}}
\newcommand{\ecc}{\textnormal{ecc}}
\DeclareMathOperator{\poly}{poly}
 \author{
 Leyla Biabani, Eindhoven University of Technology\\
 Ami Paz, LISN --- CNRS \& Paris-Saclay University}
 \date{}
\begin{document}

\title{$k$-Center Clustering in Distributed Models} 

\maketitle

\begin{abstract}
The $k$-center problem is a central optimization problem with numerous applications for machine learning, data mining, and communication networks.
Despite extensive study in various scenarios, it surprisingly has not been thoroughly explored in the traditional distributed setting, where the communication graph of a network also defines the distance metric.

We initiate the study of the  $k$-center problem in a setting where the underlying metric is the graph's shortest path metric in three canonical distributed settings: the \local, \cgst, and \clq models. 
Our results encompass constant-factor approximation algorithms and lower bounds in these models, as well as hardness results for the bi-criteria approximation setting.

\medskip\noindent\textbf{keywords:} $k$-Center clustering;
Distributed graph algorithms; 
Shortest path metric

\end{abstract}

\section{Introduction}

\subsection{Distributed $k$-Center}
The $k$-center problem is a key optimization problem, 
seeks to locate a small set of points in space (``centers'') that minimize the maximal distance from them to any other point.
The problem was extensively studied in the centralized setting, where the points are taken from a metric space, with or without guarantees on the metric.
Over the years, there was also some work on the $k$-center problem in parallel and distributed models, looking to solve the problem faster using multiple computational units.
As in the centralized setting, the points and metric under consideration in these works are taken from some metric space, and it is usually assumed that distances are given to the computational units by an oracle.

A natural setting for the \kc problem is that of a network.
We consider a graph representing a communication network in a natural way, 
and the goal is to make $k$ of the nodes into centers while minimizing the maximal distance between these nodes and all others.
Here, the metric space is not arbitrary, but is the \emph{graph metric} on the graph itself.
This is significant, for example, when determining server placement within a network to minimize the maximum delay, as seen in scenarios like content distribution over the Internet.

Two related, yet different problems are the metric \emph{facility location} problem, 
where the goal is to minimize the average distance from the servers (or equivalently, the sum of distances),
and the \emph{online $k$-server} problem, where the points and servers can move in space.

In this work, we initiate the study of \kc in the distributed setting, where an {undirected} graph represents both the communication network and the problem's metric.
We address the problem in the popular distributed models of \local, \cgst, and \clq, and derive upper and lower bounds.
While the problem was studied earlier in some of these distributed models, it is important to note that all the prior work considered points in an arbitrary metric space, independent of the communication graph.
As far as we know, our work is the first to consider the natural setting where the input graph represents both the network used for communication and the metric on which \kc is solved.

\subsection{Our Results and Techniques}
We present upper and lower bounds for the $k$-center problem in different distributed models. 
The metric considered is the shortest-path metric, and the graphs are unweighted unless otherwise specified.

\subsubsection*{The \local Model}
A simple and natural model for communication networks is the \local model~\cite{Linial92},
where the network's nodes communicate by synchronously exchanging messages of unbounded size.
The local computation time is neglected, and the complexity measure in this case is the number of communication rounds.
In this model, any problem is solvable in $D$ rounds, where $D$ is the diameter of the graph, i.e., the largest distance between two nodes in it.
Hence, our upper bounds should be read as the minimum of the given value and $D$, and the lower bounds only apply for graphs with a diameter larger than the lower bound value.

We start \cref{sec:local} by giving a relatively simple algorithm that finds a $(2k+\epsilon)$-approximate solution to the \kc problem in $O(k/\epsilon)$ rounds of the \local model.
This result relies on the assumption that the nodes have unique IDs in ${1,\ldots,n}$ --- if we assumed the range to be larger, even ensuring a specific number of centers would require global communication.
We then show that reducing the approximation ratio to be below $k-1$ requires $\Omega(n)$ time.
We stress that $D\leq n$,
so in $\Omega(n)$ time the nodes can aggregate the full structure of the network and find an exact solution.

Put differently, our results present two extremes.
On the one hand, if a large approximation ratio of at least $2k+\epsilon$ is allowed, the problem is simple---for~$\epsilon,k$ constants, it is solvable in constant time.
On the other hand, 
if a lower approximation ratio is required, e.g., a constant that does not depend on $k$, then the running time is so large that the nodes may as well compute an optimal solution.

\subsubsection*{The \cgst Model}
The \cgst model is a restrictive variant of the \local model, 
where the messages are limited to $O(\log n)$ bits.
Specifically, problems are no longer trivially solvable in $O(D)$ time,
and getting a constant approximation is solvable in non-trivial time, but is challenging.

In \cref{sec:congest} we present a $2$-approximation \cgst algorithm for the \kc problem running in $O(kD)$ rounds.
It constructs different BFS trees, sometimes from multiple sources simultaneously, and simulates a centralized greedy approximation algorithm for the \kc problem.
On the other hand, in \cref{sec:cgst lb}
we prove that improving the approximation ratio to below $4/3$, and hence also finding an exact solution, requires a much longer time:
any algorithm for this problem, even randomized, must take $\tilde\Omega(n/k)$ rounds,
and this is true even for graphs of diameter as small as 12.

This lower-bound proof is rather involved: it uses a reduction to communication complexity with a twist.
Proving \cgst lower bounds by reducing them to communication complexity is common in the literature~\cite{AbboudCKP21,SarmaHKKNPPW12,PelegR00}, where usually the solution for the \cgst problem directly implies an answer communication complexity problem. 
In our reduction, a new post-processing phase is added, where the players must do extra computation and also communicate more
after getting the solution to the (approximate) \kc problem and before finding an answer to the communication problem.

\subsubsection*{The \clq Model}
Finally, in \cref{sec:clique} we consider the more recent congested clique model, denoted \clq.
It resembles the \cgst, but with an all-to-all communication---the communication graph is a clique, and the input graph is a subgraph of it on the same set of nodes.
One might think that the $O(kD)$-round \cgst  algorithm will directly translate to an $O(k)$-round \clq algorithm, as the communication network has a diameter $1$.
However, this it not the case: the \cgst algorithm builds BFS trees in $O(D)$ time, and this step cannot be translated to an $O(1)$-round subroutine in the \clq model.

At a high level, we show that finding a $2$-approximate solution for the 
\kc problem in this model can be done deterministically in the same time as computing all-pairs-shortest paths,
which is $O(n^{1/3})$ for weighted graphs, and $O(n^{0.158})$ for unweighted graphs~\cite{Censor-HillelKK19}.
Interestingly, if~$\omega$, the exponent of the (centralized) matrix multiplication problem, will be discovered to be $2$, the time for distributed matrix multiplication algorithms $2$-approximation of \kc will also be improved, to $O(n^{\epsilon'})$ for any $\epsilon'>0$, or even to be poly-logarithmic.
Previous work implies that a $(2+\epsilon)$-approximate solution for \kc can be found in only
$O(\poly\log n)$ rounds, even in weighted graphs~\cite{BandyapadhyayIP22}.
By allowing higher approximation ratios we can get even faster algorithms, 
such as a $(4+\epsilon)$-approximation in $O(\poly\log\log n+k)$ rounds,
and up to an $O(\log n)$-approximation in $O(k)$~rounds.

En route, we prove a new result regarding approximate \kc.
A simple greedy algorithm~\cite{DBLP:journals/tcs/Gonzalez85}
finds a $2$-approximation of \kc, assuming the distance between every two nodes is known.
We extend this claim in \cref{lem:greedy:2+eps}, to show that if only a one-way, multiplicative $\alpha$-approximation of the distances is known,
then a similar greedy algorithm gives a $(2\alpha)$-approximate solution to the \kc problem.
Hence, by studying distributed \kc, we also provide new insights to the centralized case, which could be of independent interest.

Unfortunately, current techniques cannot establish lower bounds in the \clq model. 
Any non-trivial lower bound in this model will imply circuit complexity lower bounds, solving a long-standing and notoriously hard open problem~\cite{DruckerKO13}.

\section{Preliminaries}
\subsection{The $k$-Center Problem}
Consider a graph $G=(V, E)$ with nodes $V$ and edges $E$ where $|V|=n$. 
The edges in set $E$ may have weights or be unweighted, and we mainly focus on the former case, and state it explicitly when this is not the case.
The length of a path is determined by the number of edges in the path when $E$ is unweighted, and by the total weight of the edges in the path when $E$ is weighted.
The diameter of the graph $G$, denoted  $D$, is the maximum distance between any two nodes in $V$, i.e., $D=\max_{u,v\in V} \dist(u,v)$. 
For any node $u \in V$, the eccentricity is $\ecc(u) = \max_{v \in V} \dist(u, v)$, so we also have $D=\max_{u \in V}\ecc(u)$.

Let $k \in \mathbb{N}$ be a given parameter. In the $k$-center problem, we aim to find a set $S^* \subseteq V$ with a size of at most $k$ that minimizes 
the maximum distance of any node of $V$ to its
nearest center in~$S$. 
More formally, we seek a set $S^*$ with $|S^*| \leq k$ that minimizes the value $\max_{v \in V} {\min_{s \in S^*} \dist(v, s)}$. This value is denoted as $\opt(G)$ or simply $\opt$ when the context is clear.
For $\alpha \geq 1$, an algorithm is considered $\alpha$-approximation if it can compute a solution $S$ such that the distance of any node $v \in V$ to its nearest node in $S$ is at most~$\alpha \cdot \opt$.

\subsection{Computational Models}
We consider three common computational models for studying distributed graph algorithms, namely the \local, \cgst and \clq models.
We model a communication network using its graph, with nodes representing computational units and edges representing communication links.
We use $n$ for the number of nodes (computational units) and assume they have unique ids in $\{1,\ldots,n\}$ 
(specifically, there is always a node with \id 1).
The computation proceeds in synchronous rounds, where in each round each node sends messages to its neighbors, receives messages form them, and updates its local state accordingly.
The input is local, in the sense that each node initially knows only its id, list of neighbors, and if there are inputs such as edge weights, then also the weights of its incident edges.
The outputs are similarly local, e.g., at the end of the algorithm's execution each node should know if it is a center or not.

In the \local model~\cite{Peleg00}, the message sizes are unbounded, and an $r$-round algorithm is equivalent to having each node deciding by its distance-$r$ neighborhood~\cite{Linial92}.
The \cgst model~\cite{PelegR00,SarmaHKKNPPW12} is similarly defined, but each node is limited to sending $O(\log n)$-bit messages to each of its neighbors in each round, where messages to different neighbors might be different from one another.
This model allows each node to send, e.g., its id or the ids of some of its neighbors in a single round, but not its full list of neighbors.
A common primitive in this model is that of construction a BFS tree from a node.
This is sometimes extended to multiple BFS trees, or trees of bounded depth. 
See, e.g.,~\cite{LenzenPP19,HolzerW12}.

Finally, we model a network with a congested all-to-all overly network by the \clq model~\cite{LotkerPPP03,DruckerKO13}.
In this model, the input is a network as before, but the communication is less limited: in each round, each node can send $O(\log n)$-bit messages to each other node in the graph, and not only to its neighbors.
This allows the nodes, e.g., to re-distribute the inputs in constant time~\cite{DolevLP12,Lenzen13}, and compute all-pairs-shortest-paths in sub-linear time~\cite{Censor-HillelKK19}.

\subsection{Communication Complexity}
We prove lower bounds for the $k$-center problem in the \cgst model using a reduction to \emph{communication complexity},
a well-studied topic in theoretical computer science~\cite{KushilevitzN:book96,RaoY20}.

In the two-party \emph{set disjointness} (henceforth: disjointness) communication complexity problem,
two players referred to as Alice and Bob get two $\ell$-bit strings $x$ (Alice's string) and $y$ (Bob's string), and communicate by exchanging messages on a reliable asynchronous channel.
Their goal is to decide if the sets represented by the indicator vectors $x$ and $y$ are disjoint or not, 
i.e., if there is an index $i$ such that $x[i]=y[i]=1$, in which case we say they are not disjoint and the players must output $0$, 
or otherwise, the sets are disjoint and they should output $1$.

Alice and Bob follow some protocol indicating who should send messages at each step and what message to send.
The communication complexity of a protocol (as a function of $\ell$ is the maximal number of bits they exchange when executing it, 
and the deterministic communication complexity of a problem the the minimal communication complexity of a protocol solving the problem.
The randomized communication complexity is similarly defined, but the players may also use random bits when executing the protocol, and the success probability is at least $2/3$, when taken on the choice of random bits.

For the disjointness problem, there is an $\Omega(\ell)$ lower bound,
which holds for deterministic and randomized algorithms alike~\cite[Example 3.22]{KushilevitzN:book96}.

\subsection{Related Work}
To our surprise, the \kc problem was not studied in traditional distributed computing models.
We survey below works in related computational models, and works on the related problem of metric facility location in distributed settings.

\subsubsection{$k$-Center in Related Computational Models}
\citet*{BandyapadhyayIP22} studied the \kc problem, along with the related uncapacitated facility location and $k$-median problems.
They consider the $k$-machine model (the parameter $k$ here need not be the same as in \kc) which is closely related to the \clq model.
As mentioned, their work implies a randomized $(2+\epsilon)$-approximation $O(\poly\log n)$-time \kc algorithm in the \clq model.
\citet*{ChiplunkarKR20} studied approximate \kc in \emph{streaming} models, and also in a parallel model where multiple processors perform local computations and then send the results to a central processor.
Surprisingly, these seem to be the only works on the \kc problem in distributed settings.
\citet*{CrucianiFGNS24}
recently studied the \kc problem in a centralized \emph{dynamic graphs} setting.

\subsubsection{Metric Facility Location}
The metric facility location problem has attracted more attention in distributed settings than the \kc problem.
Works on this problem consider models that share similarities with the ones we consider here, although typically the models are not precisely identical.

\paragraph{The \cgst Model}
In an unpublished manuscript, \citet*{BriestDKKP11} studied 
metric facility location in the \cgst model.
They focused on a bipartite graph, with one side representing facilities and the other representing clients.
Some of their results appeared in the thesis of 
\citet*{Pietrzyk13}.
This work improved upon previous work that also consider metric facility location in bipartite setting in the \cgst model~\cite{PanditP09,MoscibrodaW05}.

\paragraph{The \clq Model}
\citet*{GehweilerLS14} studied metric facility location in a model resembling the \clq.
They present a $3$-round randomized algorithm that gives a constant approximation factor,
based on the method of Mettu and Plaxton~\cite{MettuP03}.

\paragraph{Distributed Large-Scale Computational Models}
\citet*{InamdarPP18} studied metric facility location in the \clq model, MPC and $k$-machine, and gave an $O(1)$-approximation algorithm using a Mettu-Plaxton-style algorithm. 
When considering the \clq model with what they call ``implicit metric'', their model coincides with ours.
They also consider another input regime (``explicit metric''),
but not in the \local and \cgst models, and not for the $k$-center problem.
Their work improves upon earlier works on 
metric facility location in the \clq model~\cite{HegemanP15,BernsHP12}.

\section{The $k$-Center Problem in the \local Model}
\label{sec:local}
In this section, we consider the $k$-center problem in the \local model.
We start with a fast and  simple $((2+\epsilon)k)$-approximation algorithm, \cref{alg:local}.

\begin{lemma}
For any $\epsilon>0$ there is a deterministic $O(k/\epsilon)$-round 
algorithm in the \local model that gives a $((2+\epsilon)k)$-approximate solution for the $k$-center problem.
\end{lemma}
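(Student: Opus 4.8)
The plan is to exploit the fact that the IDs lie in $\{1,\ldots,n\}$ so that each node can, by looking only at a bounded-radius neighborhood, decide locally whether to become a center, while the ID range forces the total number of chosen centers to be at most $k$. Concretely, I would partition the nodes into centers and non-centers according to a rule based on local ID-minimality within a ball of some radius $r = \Theta(1/\epsilon)$. First I would have every node run a BFS to depth $r$ and learn the IDs and the mutual distances of all nodes within distance $r$; in the \local model this costs exactly $r = O(1/\epsilon)$ rounds since messages are unbounded. Then I would let a node $v$ declare itself a center precisely when its ID is the smallest among all nodes within its radius-$r$ ball (or some closely related locally checkable condition). Because this selection rule is purely local, the round complexity is $O(r) = O(1/\epsilon)$, and I would iterate or scale the radius to reach the stated $O(k/\epsilon)$ bound.

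The approximation analysis would proceed by relating the selected set $S$ to an optimal solution $S^*$ of value $\opt$. The key geometric observation I expect to use is that in an unweighted graph, a ball of radius $\opt$ around each optimal center covers all nodes, so there are at most $k$ such balls; I would argue that within each such ball the local selection rule picks at least one center, and that every node is within a controlled distance of some selected center. The factor of $k$ in the approximation ratio should arise because a node may need to ``chain'' through up to $k$ optimal clusters to reach the nearest selected center, with each hop costing roughly $2\opt$ (the diameter of an optimal cluster), so that the worst-case distance is bounded by something like $k \cdot 2\opt$ plus lower-order slack, giving the $(2+\epsilon)k$ ratio once the radius $r$ is tuned so that the additive $\epsilon$-term is absorbed. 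I would formalize this by showing that if a node is far from every selected center, one can walk greedily toward decreasing IDs and accumulate at most $k$ segments before being forced to hit a center, since there are only $k$ optimal clusters to pass through.

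The main obstacle I anticipate is controlling the number of centers simultaneously with the approximation quality: a naive ``local-minimum ID'' rule might select more than $k$ centers, so the crux is designing the selection radius and the tie-breaking so that the ID-range assumption $\{1,\ldots,n\}$ genuinely caps the count at $k$ while still guaranteeing coverage. I would handle this by tying the selection threshold to the parameter $k$ explicitly---for instance, choosing centers in $k$ phases, where in phase $i$ a node becomes a center only if it is ID-minimal in its $r$-ball and not yet dominated by an earlier center, and arguing inductively that after $k$ phases every node is within distance $(2+\epsilon)k\cdot\opt$ of a center while at most $k$ centers have been opened. Verifying that this phased process both terminates in $O(k/\epsilon)$ rounds and respects the budget $|S|\le k$ is the delicate part, and it is where the precise interaction between the radius $r$, the $\epsilon$-slack, and the number of phases must be pinned down carefully.
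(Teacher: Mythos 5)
There is a genuine gap, and it lies exactly where you flagged the ``delicate part'': no rule of the form ``become a center if your ID is minimal in your radius-$r$ ball'' can respect the budget $|S|\le k$. On a cycle (or path) with $n \gg k r$, such a rule selects every local ID-minimum, and there are $\Theta(n/r)$ of them --- far more than $k$ --- regardless of the fact that IDs lie in $\{1,\ldots,n\}$, since local minimality depends only on the relative order of IDs. Your $k$-phase refinement does not repair this: already in phase~1, many pairwise-distant local minima become centers simultaneously, and none of them can see (or be ``dominated by'') any other within $O(k/\epsilon)$ rounds. The same example breaks the coverage side of your analysis: a non-center node can only be guaranteed a smaller-ID node within distance $r$, and following such pointers gives a chain whose length is bounded by nothing better than the number of distinct IDs, not by $k$; there is no link between the ID ordering and the optimal clusters, so the claimed ``at most $k$ hops of cost $2\opt$'' has no justification.

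The idea you are missing is that the $(2+\epsilon)k$ bound does not come from a covering argument at all, but from a dichotomy on the diameter. The paper's algorithm runs a single BFS from the (unique, globally known) node with ID $1$ to depth $tk$ with $t=2+4/\epsilon$. If the BFS reaches everyone, then $D\le tk$, and in $O(tk)=O(k/\epsilon)$ further rounds node $1$ collects the entire graph and outputs an optimal (or greedy $2$-approximate) solution. Otherwise $D>tk$, and node $1$ simply declares itself the \emph{only} center --- this is the sole place the ID assumption is used: it guarantees a predetermined node exists and can self-select with zero communication, which is how the count is capped at $k$ (indeed at $1$). The approximation then follows because a shortest path between two nodes at distance $D$ has $D+1$ nodes, at most $k$ of them centers, so some optimal center covers at least $(D+1-k)/k$ path nodes, forcing $\opt\ge D/(2k)-1$; meanwhile the single center's eccentricity is at most $D$, so the ratio is at most $D/(D/(2k)-D/(tk))=2tk/(t-2)=(2+\epsilon)k$. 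In short: when the graph is shallow you can afford to learn it entirely, and when it is deep, $\opt$ is so large that any one node is already a $((2+\epsilon)k)$-approximate solution.
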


 \begin{algorithm}[tb]
\caption{A $((2+\epsilon)k)$-approximation in the \local model}\label{alg:local}
\begin{algorithmic}[1]
    \State $t\gets 2+\frac{4}{\epsilon}$
    \State Perform a BFS from $v_1$ for $tk$ rounds
    \State Report back on the tree if terminated on all branches under you
    \If{All branches terminated}
    \State Aggregate all the graph to $v_1$
    \State Locally compute an optimal solution
    \State Disseminate the solution on the BFS tree
    \Else{} Only $v_1$ marks itself as a center
    \EndIf
\end{algorithmic}
\end{algorithm}

\begin{proof}
Let $t=2+\frac{4}{\epsilon}$.
The algorithm starts by having the node with \id $1$ (or any other arbitrary node) initiating the construction of a BFS tree of depth $tk$. 
After $tk$ rounds, 
the nodes report back up the tree if the tree construction algorithm terminated in all its branches (i.e., reached all the nodes) or not.
If it terminated, we have $D\leq tk$,
and in another $O(tk)$ rounds node 1 can aggregate all the graph structure and find an optimal solution.
Otherwise, node 1 becomes the only center.

We claim that this simple algorithm gives a $(2+\epsilon)k$ approximate solution;
clearly, we only need to prove it for the case $D>tk$.
The eccentricity of node 1 is at most $D$, which gives an upper bound on $\alg$, the quality of the solution given by the algorithm.

Recall that $\opt$ is the largest distance from a node to its center in an optimal solution.
Let $\pi$ be the shortest path between two nodes of the largest distance, i.e., two nodes of distance $D$.
In an optimal solution, there are at least $D+1-k$ non-center nodes in $\pi$. Therefore, there is a center that covers at least $(D+1-k)/k$ nodes of $\pi$.
Hence, $\opt$ is at least $(D+1-k)/(2k)=D/2k+1/2k-1/2\geq D/2k -1$.

Note that $D>tk$ implies $D/(tk)>1$.
The approximation ratio is thus 
\[\frac{\alg}{\opt}
\leq\frac{D}{D/2k-1}
\leq\frac{D}{D/2k-D/(tk)}
=\frac{2tk}{t-2}
=(2+\epsilon)k
\]
by the choice of $t$, as required.
\end{proof}

This algorithm utilizes the fact that local computation is not limited in the \local model, so $v_1$ can find an optimal solution.
However, it is not strictly necessary to use all this computational power: in case $v_1$ has to locally solve \kc, it can instead compute a $2$-approximation in a greedy manner.
The approximation ratio is still as required, and the local computation now takes polynomial time.

Next, we move to the main result of the section: a lower bound for the $k$-center problem in the \local model. 
Our lower bound also works for bi-criteria algorithms, which return at most $\beta k$~centers instead of at most $k$ centers, for some $\beta\geq1$;
if the algorithm returns at most $k$ centers, we can simply set $\beta=1$. 
Our lower bound states that any $t$-round algorithm cannot achieve an approximation ratio better than $k-\frac{k^2+k(\beta k-1)(2t+1)}{n+k}$. 
If $k^2+k(\beta k-1)(2t+1) < n+k$ holds, then our lower bound states that it is not possible to get an approximation ratio better than $k-1$. 
The typical case is the non-bi-criteria one, with $k$ being a constant, 
and there our lower bound translates to stating that getting an approximation ratio better than $k-1$ requires
linear time.

\begin{theorem}
\label{thm:lower:bound:local}
    Let $t, k \in \mathbb{N}$, $\beta\geq 1$.
    Any $t$-round bi-criteria deterministic \local algorithm that solves the $k$-center problem and reports at most $\beta k$ centers as the solution, cannot have an approximation ratio better than $ k-\frac{k^2+k(\beta k-1)(2t+1)}{n+k}$, where $n > 2\beta kt$ is the size of the graph and $\beta \geq 1$.  
\end{theorem}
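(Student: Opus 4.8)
The plan is to prove the lower bound by exhibiting a single hard instance --- an unweighted graph together with a worst-case assignment of the identifiers $\{1,\dots,n\}$ --- on which every $t$-round bi-criteria \local algorithm that outputs at most $\beta k$ centers is forced to leave a large uncovered region. I would take the underlying graph to be essentially a line (equivalently, a cycle), so that its diameter is $\Theta(n)$ and large distances are available to be exploited. The two quantities to control are $\opt$, which I bound from above by an explicit solution, and $\alg$, which I bound from below using locality. For the upper bound on $\opt$ I would simply place $k$ centers in an equally spaced pattern along the line: this covers all $n$ nodes with radius at most $(n+k)/(2k)$, so $\opt \le (n+k)/(2k)$. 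This explicit solution is what pins down the denominator $n+k$ in the final expression.

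The heart of the proof is the lower bound on $\alg$. Here I would use the defining feature of the \local model: after $t$ rounds a node's output is a function only of its radius-$t$ ball, i.e.\ of a window of $2t+1$ consecutive nodes together with their identifiers~\cite{Linial92}. The goal is to show that, subject to emitting at most $\beta k$ centers, the algorithm must concentrate all of them into short regions whose combined span is at most $(\beta k-1)(2t+1)$, so that a complementary middle segment of length at least $n-(\beta k-1)(2t+1)$ contains no center. Such a segment is covered only from its two boundaries, so its midpoint lies at distance at least $\tfrac12\bigl(n-(\beta k-1)(2t+1)\bigr)$ from every center, giving $\alg \ge \tfrac12\bigl(n-(\beta k-1)(2t+1)\bigr)$. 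The intuition is that each center can be ``justified'' only by the distinctive contents of the $2t+1$ window it sees; the adversary arranges the identifiers so that only about $\beta k$ such distinctive windows exist and can be packed together, while every other window looks like a featureless stretch in which, were the algorithm to place a center, indistinguishability would force it to place one at every similar node, thereby violating the $\beta k$ budget.

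Making this concentration argument fully rigorous is the step I expect to be the main obstacle, precisely because the identifiers are confined to $\{1,\dots,n\}$ rather than drawn from an unboundedly large universe. With a huge identifier space one could invoke a Naor--Stockmeyer-style Ramsey argument to render the algorithm order-invariant and then read off a monochromatic, hence center-free, stretch; with identifiers in $\{1,\dots,n\}$ the algorithm may legitimately exploit absolute identifier values, so a naive counting or Ramsey reduction yields only the much weaker gap $\Theta(n/(\beta k))$. I would instead set this up as a multi-instance indistinguishability (cut-and-paste) argument: construct several line instances that share a long common sub-path carrying \emph{identical} identifiers, so that the interior nodes of that sub-path have identical radius-$t$ views and hence identical center decisions across all instances; then if the common decision is ``place a center'' the $\beta k$ budget is violated in one of the instances, whereas if it is ``place no center'' the common sub-path is exactly the desired center-free segment.

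Finally I would combine the two bounds:
\[
\frac{\alg}{\opt}\ \ge\ \frac{\tfrac12\bigl(n-(\beta k-1)(2t+1)\bigr)}{(n+k)/(2k)}\ =\ \frac{k\bigl(n-(\beta k-1)(2t+1)\bigr)}{n+k}\ =\ k-\frac{k^2+k(\beta k-1)(2t+1)}{n+k},
\]
which is exactly the claimed ratio. The hypothesis $n>2\beta k t$ is what guarantees that the forced middle gap, and hence $\alg$, is positive, so the construction is non-degenerate.
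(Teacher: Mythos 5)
Your skeleton matches the paper's proof: a cycle instance, the equally-spaced bound $\opt \le (n+k)/(2k)$, a locality argument meant to force a center-free arc of length about $n-(\beta k-1)(2t+1)$, and the same closing algebra. You also correctly identify the crux: with identifiers restricted to $\{1,\ldots,n\}$, order-invariance/Ramsey arguments are unavailable. However, the step you propose to resolve this --- the multi-instance dichotomy --- has a genuine gap. In the branch where some interior node of the common sub-path decides ``center,'' you claim the $\beta k$ budget is violated in one of the instances; it is not. That node contributes exactly \emph{one} center to each instance containing the sub-path, and to exceed the budget you would need more than $\beta k$ center-triggering windows packed into a \emph{single} instance. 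Since identifiers are unique, a given window can appear at most once per instance, and nothing guarantees that many pairwise ID-disjoint center-triggering windows exist (they might all contain, say, the node with \id $1$). Conversely, the ``no center'' branch needs \emph{every} interior node of the long sub-path to decline, which your construction cannot force. What the dichotomy actually yields is only ``at most $\beta k$ centers inside the common sub-path,'' and those can be spread evenly along it, leaving center-free stretches of length only $\Theta(n/(\beta k))$ --- precisely the weak bound you set out to beat.

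The paper closes this gap with a maximality argument, which is the ingredient your proposal is missing. Let $k'\le\beta k$ be the \emph{maximum}, over all cyclic arrangements of the identifiers $1,\ldots,n$, of the number of centers the algorithm reports, and let $C$ attain this maximum, with centers $c_1,\ldots,c_{k'}$. Build $C'$ by cutting out the (suitably truncated) $(2t+1)$-node segments around the $c_i$'s, concatenating them, and appending all leftover nodes. Indistinguishability is invoked only for the chosen centers: their distance-$t$ views are preserved, so they remain centers in $C'$. The leftover nodes' views \emph{do} change --- no indistinguishability is claimed for them --- but if any of them became a center, $C'$ would have more than $k'$ centers, contradicting the maximality of $k'$. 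Hence the leftover arc, of length at least $n-(k'-1)(2t+1)$, contains no center, giving $\alg \ge \bigl(n-(k'-1)(2t+1)\bigr)/2$ and the claimed ratio. Note that this region is center-free not because its nodes cannot distinguish $C$ from $C'$, but because of the counting argument; that is exactly what a purely view-preservation (cut-and-paste) framing cannot supply.
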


At a high-level, the proof considers a communication graph which is a cycle, an algorithm in the \local model that is faster than the lower bound, and the centers chosen by it.
In $t$ rounds, each node (and specifically, each chosen center) can gather information only from a segment of $2t+1$ nodes.
We thus create a new cycle by concatenating all the $(2t+1)$-node segments around the chosen centers.
By the choice of the parameters, this leaves some ``leftover nodes'' that are not in any segment, which are concatenated after all the segments.
When executing the same algorithm on the new cycle, the same nodes as before will become centers, as they gather exactly the same information. 
The ``leftover nodes'' are now far from all the chosen centers, rendering a bad approximation ratio, as claimed.

\begin{proof}
     Let $\mathcal{A}$ be an algorithm that finds an (approximate, bi-criteria) solution for the $k$-center problem.
     Let $k' \leq \beta k$ be the maximum number of centers that $\mathcal{A}$ reports for any cycle of length $n$,  and let $C$ be such a cycle.
     For simplicity, we label the nodes of $C$ with numbers $1,\ldots,n$ in a clockwise order. 
     Assume that $c_1<\ldots<c_{k'}$ are the centers returned by $\mathcal{A}$ in clockwise order. 
     Since $n > 2kt$, there exists at least two consecutive centers such that the distance between them is more than $2t$.
    We introduce a new cycle $C'$ by re-arranging the nodes and show that the solution that $\mathcal{A}$ finds for $C'$
    has an approximation ratio of at least $ k-\frac{k^2+k(\beta k-1)(2t+1)}{n+k}$.
    
    To build the cycle $C'$, we first define segments $S_1,\ldots,S_{k'}$ of $C$ as follows.
    Each segment $S_i$ is of the form $S_i=[b_i,e_i]$, which refers to the nodes $b_i$ to $e_i$ in the clockwise order. 
    Roughly speaking, each $S_i$ is a segment of $2t+1$ nodes centered around $c_i$, but since such segments might overlap, the exact definition is a bit more subtle.
    
    We define $e_i:=\min(c_i+t, c_{i+1}-1)$ for any $1\leq i<k'$ and $e_{k'}:=\min(c_{k'}+t, n+c_1-1)$.
     We also define $b_1:=\max(c_1-t, n-c_{k'}+1)$ and $b_i := \max(c_i-t,e_{i-1}+1)$ for any $1<i\leq k'$.
    Note that $e_k$ may be larger than $n$, in which case we consider $S_{k'}$ as the segment starting at $b_i$ and ending at $e_{k'}-n$ in the clockwise order, i.e. $S_{k'}=[b_{k'},n]\cup [1,e_{k'}-n]$.
    Similarly, $b_1$ might be smaller than $1$, in which case we consider nodes $b_1+n$ to $e_1$ in the clockwise order, i.e. $S_1=[b_1+n,n] \cup [1,e_1]$. 

    Since we assumed $n>2\beta kt$, then there is an $i^*$ such that the distance between $c_{i^*}$ and $c_{i^*+1}$ is more than $2t$ (we consider $c_1$ as $c_{i^*+1}$ if $i^*=k'$).
    To build $C'$, we concatenate the segments $S_{i^*+1}, S_{i^*+2},\ldots,S_k$ followed by $S_{1},S_{2},\ldots,S_{i^*}$, and finally all the remaining nodes of $C$, in a clockwise order (see Figure~\ref{fig:cycle}).

    \begin{figure}
    \begin{center}
    \includegraphics[width=\linewidth]{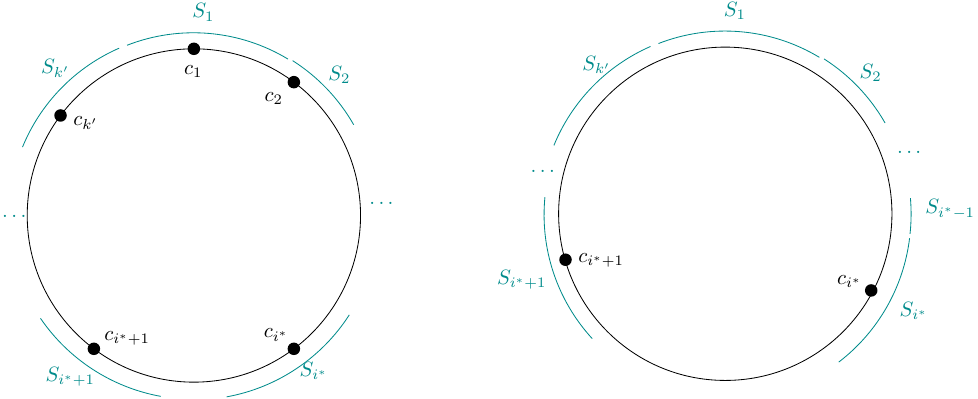}
    \end{center}
    \caption{Illustration for the proof of Theorem~\ref{thm:lower:bound:local}. Left: cycle $C$. Right: cycle $C'$}
    \label{fig:cycle}
    \end{figure}
     
     Now, we claim that there is an execution of  $\mathcal{A}$ on $C'$ that reports $c_1,\ldots,c_{k'}$ as the centers. 
     Since the distance between $c_{i^*}$ and $c_{i^*+1}$ is more than $2t$, then the neighbourhood of length $t$ for the nodes $c_{i^*}$ and $c_{i^*+1}$ is disjoint. 
     Along with the definition of the segments, 
     we can conclude that the distance-$t$ neighborhood for any node $c_i$ is the same in both $C$ and $C'$. 
     Hence, any node $c_i$ receives the same information for both graphs $C$ and $C'$, and therefore, 
     each node $c_i$ makes the same decisions.
     Since $c_1,\ldots,c_k$ chose to be the centers in $C$, they also chose to be the centers in $C'$. 
     Moreover, we assumed that $k'$ is the maximum number of centers that $\mathcal{A}$ reports for a cycle of length $n$, and hence,
    no other node chooses to be a center.

     We next discuss the approximation ratio. 
     According to the definition of $S_i$, each $S_i$ is of size at most $2t+1$. Therefore, there are at most $(k'-2)(2t+1)$ nodes in $S_{i^*+2},S_{i^*+3}\ldots,S_{k}, S_{1}, S_{2}, \ldots,S_{i^*-1}$. 
     Then, in the clockwise path from $c_{i^*+1}$ to $c_{i^*}$ in $C'$ there are at most $(t+1)+(k'-2)(2t+1)+(t+1)=(k'-2)(2t+1)+2t+2$ nodes. 
     This means that the distance from $c_{i^*+1}$ to $c_{i^*}$ in $C'$ in the anti-clockwise order is at least $n-((k'-2)(2t+1)+2t+2)+1=n-(k'-1)(2t+1)$, and 
     note that there is no center in this path. 
     Hence, there is a node in the anti-clockwise path from $c_{i^*+1}$ to $c_{i^*}$ in $C'$ such that its distance to the nearest center is at least $(n-(k'-1)(2t+1))/2$.
     
     On the other hand, the optimal solution has
     $\opt\leq\lceil(n-k)/(2k)\rceil$. To see this, we can choose the $k$ centers such that the distance between any two consecutive centers in the clockwise order is  
     $\lceil n/k\rceil$ or $\lfloor n/k\rfloor$.
     Thus, $\opt \leq \lceil(n-k)/(2k)\rceil$ holds 
     and the approximation ratio of Algorithm $\mathcal{A}$ is at least
     \begin{align*}
     &\frac{(n-(k'-1)(2t+1))/2}{\lceil(n-k)/(2k)\rceil} 
     \geq \frac{(n-(k'-1)(2t+1))/2}{(n-k)/(2k)+1}  
     = k\frac{n-(k'-1)(2t+1)}{n-k+2k} \\
     &\quad \quad  \quad  
     = k-\frac{k^2+k(k'-1)(2t+1))}{n+k}
     \geq  k-\frac{k^2+k(\beta k-1)(2t+1)}{n+k} \enspace,
    \end{align*}
     which finishes the proof.
\end{proof}

\section{A $2$-Approximation in the \cgst Model}
\label{sec:congest}
In this section, we show how to achieve a $2$-approximate $k$-center clustering in the \cgst model in $O(kD)$ rounds, where $D$ is the diameter of the underlying graph. Algorithm~\ref{alg:congest} presents an overview of our technique. In the following, we explain each part of this algorithm in detail.

\begin{algorithm}
\caption{A $2$-approximation in the \cgst model}\label{alg:congest}
\begin{algorithmic}[1]
    \State Find the node $v_{\min}$ with minimum \id
    \State $S\gets \{v_{\min} \}$
    \For{$k-1$ times}
        \State Perform a BFS from all the nodes in $S$
        \State Let $v^*$ be the furthest node from $S$, breaking ties by \id
        \State $S \gets S \cup \{v^*\}$
    \EndFor
    \State Each node in $S$ marks itself as a center
\end{algorithmic}
\end{algorithm}
As opposed to the \local algorithm, here we do not need to use the assumption that the nodes ids are in $\{1,\ldots,n\}$, and assuming they are taken from $\{1,\ldots,\poly n\}$ suffices.
The first step of Algorithm~\ref{alg:congest} is finding the node $v_{\min}$ with the minimum \id, which can be done in $O(D)$ rounds (and can be skipped if the ids are in $\{1,\ldots,n\}$). 
To do this, we start a BFS from all nodes.
In each round, each node may receive messages of BFS trees from multiple sources.
If this happens, such a node only continue the BFS from the source with the minimum \id it has seen so far, and ignores the BFS's for all other sources. 
Therefore, the only source that its BFS is not paused after $D$ rounds is $v_{\min}$. 
Each leaf of a BFS tree reports the termination of the construction up the tree, and back to the tree's parent.
Hence, $v_{\min}$ becomes aware that it is the node with the lowest \id in $O(D)$ rounds.
In another $O(D)$ time, $v_{\min}$ disseminates along its BFS tree the depth of this tree, and all nodes learn this value, to which we refer as $D'$. 
Observe that $D' \leq D<2D'$,
and $2D'$ will be used when an upper bound on~$D$ is needed (e.g., for the time of each iteration described below).

We next set $S=\{v_{\min}\}$, and preform $k-1$ iterations. 
In each iteration, a BFS is performed, where the sources are all nodes in $S$.
Each node chooses to join the first BFS tree which reaches to it, breaking ties arbitrarily.
After $O(D)$ rounds, the BFS is done and each node knows its distance to $S$.
We next need to find the node with minimum \id among the furthest nodes to $S$, which we refer to it as $v^*$.
This can be done in $O(D)$ rounds. In each round, every node informs its neighbors which node has the maximum distance among those it is currently aware of, and if it knows more than one node with maximum distance, it only reports the one with the smallest \id. 
After $D'$ such rounds, node $v^*$ knows that it is the furthest node to $S$ with minimum \id. 
Finally, $v^*$ can be added to $S$.
At the end of the algorithm, the nodes in $S$ mark themselves as centers.


The proof for the approximation ratio of our algorithm comes from the approximation ratio of the known greedy approach by \citet{DBLP:journals/tcs/Gonzalez85}, which we applied in our algorithm. In Lemma~\ref{lem:greedy:2approx} we formally state this approximation. 

\begin{lemma}[\cite{DBLP:journals/tcs/Gonzalez85}]
\label{lem:greedy:2approx}
    Let $G$ be a graph, and $k\geq 1$ be an integer. Assume $S_1={v_1}$, where $v_1\in G$ is an arbitrary node. For each $1<i\leq k$, we have $S_i=S_{i-1} \cup \{v_i\}$, where $v_i$ is a node of $G$ with maximum distance to $S_{i-1}$. Then, $S_k$ is a $2$-approximate solution for $k$-center of $G$.
\end{lemma}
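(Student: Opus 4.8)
The plan is to prove the classical Gonzalez $2$-approximation guarantee via the standard argument based on the greedy algorithm's monotone structure. Let $\opt$ denote the optimal $k$-center radius, and let $r$ denote the cost of the greedy solution $S_k$, i.e.\ the maximum over all nodes $v$ of $\dist(v, S_k)$. The goal is to show $r \leq 2\opt$.

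The key structural observation I would establish first is that the greedy selection produces points that are ``well separated.'' Concretely, I claim that every node $v_i$ added to the set satisfies $\dist(v_i, S_{i-1}) \geq r$, and more importantly that the pairwise distance between any two selected centers, as well as the distance from any uncovered node to its nearest center, is at least $r$. The crucial fact is that the sequence of greedy distances $\dist(v_i, S_{i-1})$ is \emph{non-increasing} in $i$: since $S_{i-1} \subseteq S_i$, the distance of any fixed node to the growing set can only shrink, so the maximum distance chosen at step $i$ is at least the maximum chosen at step $i+1$. Consequently, if we let $w$ be the node realizing the final cost $r = \dist(w, S_k)$, then $w$ together with the $k$ selected centers $v_1,\dots,v_k$ forms a collection of $k+1$ nodes that are pairwise at distance at least $r$ from one another.

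With these $k+1$ mutually $r$-separated points in hand, I would invoke a pigeonhole argument against the optimal solution $S^*$. Since $S^*$ uses at most $k$ centers, by pigeonhole two of our $k+1$ separated points, say $p$ and $q$, must be assigned to the same optimal center $s^* \in S^*$. By the definition of $\opt$, both $\dist(p, s^*) \leq \opt$ and $\dist(q, s^*) \leq \opt$. The triangle inequality (which holds since $\dist$ is the shortest-path metric) then gives $\dist(p,q) \leq \dist(p,s^*) + \dist(s^*,q) \leq 2\opt$. But $p$ and $q$ are among the $r$-separated points, so $\dist(p,q) \geq r$, yielding $r \leq 2\opt$ as desired.

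The main obstacle, and the step requiring the most care, is establishing the $r$-separation of the full set $\{v_1,\dots,v_k,w\}$ rigorously from the non-increasing property. I need to verify that for any $i < j$, $\dist(v_i, v_j) \geq r$: this follows because $v_j$ was chosen to maximize distance to $S_{j-1} \supseteq \{v_i\}$, so $\dist(v_i, v_j) \geq \dist(v_j, S_{j-1})$, and by monotonicity this last quantity is at least $\dist(v_k, S_{k-1}) \geq r$ (noting $r = \dist(w, S_k) \leq \dist(w, S_{k-1})$ combined with the maximality defining $v_k$). I would spell out this chain of inequalities carefully, since the interplay between the final uncovered node $w$ and the last greedy step is exactly where an off-by-one error could creep in; everything else is routine triangle inequality and pigeonhole.
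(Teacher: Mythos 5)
Your proof is correct and follows essentially the same approach as the paper: although the paper cites Gonzalez for this lemma without reproving it, its proof of the generalization in \cref{lem:greedy:2+eps} (the $\alpha$-approximate-distance version) is precisely your argument specialized to $\alpha=1$ --- the greedy choices plus the final uncovered node form $k+1$ points pairwise separated by the greedy radius, and pigeonhole over the $k$ optimal centers with the triangle inequality yields the factor $2$. Your explicit use of the monotonicity of the greedy distances is a slightly more detailed route to the separation property that the paper's proof asserts directly, but the underlying idea is identical.
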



We can now summarize this section in Theorem~\ref{thm:congest:alg}.

\begin{theorem}
\label{thm:congest:alg}
    There exists a deterministic $2$-approximation
    algorithm for the $k$-center problem in the \cgst model that needs $O(kD)$ communication rounds.
\end{theorem}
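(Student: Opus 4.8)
The plan is to prove the theorem by assembling three pieces: the correctness guarantee of Gonzalez's greedy rule (\cref{lem:greedy:2approx}), the observation that \cref{alg:congest} faithfully simulates that rule, and a round-complexity accounting of each simulated step. Since \cref{lem:greedy:2approx} already certifies that the greedy set $S_k$ is a $2$-approximate \kc solution, there are really only two things left to verify: that the algorithm's output coincides with a valid greedy sequence $v_1,\dots,v_k$, and that the whole computation runs in $O(kD)$ rounds while every message stays within the $O(\log n)$-bit budget of the \cgst model.

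First I would establish correctness. The algorithm sets $S=\{v_{\min}\}$ and then, for $k-1$ iterations, adds the node of maximum distance to the current set $S$, breaking ties by \id. This is exactly the greedy step of \cref{lem:greedy:2approx}: the starting node there is arbitrary, so $v_{\min}$ is a legitimate choice, and the deterministic tie-break by \id guarantees a well-defined selection at each step. Hence the reported set is a greedy set, and \cref{lem:greedy:2approx} immediately yields the $2$-approximation. For the running time I would account step by step: computing $v_{\min}$ and disseminating the value $D'$ with $D'\le D<2D'$ costs $O(D)$ rounds; each of the $k-1$ iterations performs a multi-source BFS followed by a max-finding pass, each in $O(D)$ rounds; summing gives $O(kD)$.

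The step I expect to be the real obstacle — and the one deserving the most care — is confirming that the two per-iteration operations respect the $O(\log n)$-bit bound even though the BFS is launched from as many as $k$ sources simultaneously. A naive multi-source BFS tracking \emph{per-source} distances would incur congestion scaling with $|S|$, which would ruin the $O(D)$-per-iteration bound. The key observation I would stress is that the greedy rule never needs per-source distances: it only needs each node's distance to the \emph{set} $S$. Thus the multi-source BFS can be run as a single distance-relaxation wave in which every node $v$ maintains and forwards one value $\dist(v,S)$ (the smallest distance to $S$ it has yet discovered); since $\dist(v,S)\le D\le n$, this fits in $O(\log n)$ bits, and the wave stabilizes in $O(D)$ rounds because it propagates at most $D$ hops. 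The subsequent selection of $v^*$ is handled by flooding a single $(\dist(v,S),\,\id)$ pair, at each node keeping the pair with the largest distance and, among equals, the smallest \id; this pair is again $O(\log n)$ bits (ids lie in $\{1,\dots,\poly n\}$), and after $D'$ rounds the global maximum reaches every node, so $v^*$ identifies itself correctly. Verifying that these two subroutines together implement one greedy step within the congestion constraint, and in $O(D)$ rounds, completes the argument.
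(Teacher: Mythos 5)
Your proposal is correct and takes essentially the same route as the paper: correctness is delegated to \cref{lem:greedy:2approx} (with $v_{\min}$ as the arbitrary starting node and \id-based tie-breaking), and the $O(kD)$ bound comes from an $O(D)$ setup phase plus $k-1$ iterations, each consisting of a multi-source BFS that lets every node maintain a single $O(\log n)$-bit value $\dist(v,S)$ followed by a flood of $(\dist(v,S),\id)$ pairs to elect $v^*$. The congestion observation you single out---tracking distance to the \emph{set} $S$ rather than per-source distances---is exactly how the paper's algorithm (each node joining the first BFS wave to reach it) stays within the \cgst bandwidth, so the two arguments coincide.
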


\section{A Lower Bound in the \cgst Model}
\label{sec:cgst lb}
In this section, we show a lower bound on the number of communication rounds for any algorithm for the $k$-center problem in the \cgst model with an approximation ratio better than $4/3$. We start with a lower bound for the $1$-center problem and then extend it for the $k$-center problem for a general $k$.

\subsection{A Lower Bound for the $1$-Center Problem}
\label{sec:lb:congest:1center}
To prove the lower bound, we use a graph that was introduced by \citet*{AbboudCKP21} in order to prove a lower bound for computing the radius of a graph.
Let $x$ and $y$ be two binary strings of length $\len$. 
The graph $G_{x, y}$ is built as follows,
on $n=\Theta(\ell)$ nodes (see \cref{fig:lb:cogest}).
On a high level, the graph consists of two main sets of $\len$ nodes each, $A$ and $B$, and a path on $4$ nodes: $c_A,\bar{c}_A,c_B,\bar{c}_B$.
The input strings $x$ and $y$ for a set-disjointness problem are used to set which edges from $A$ to $\bar{c}_A$ exist (representing $x$), and which edges form $B$ to $\bar{c}_B$ exist (representing $y$).
The rest of the graph is built in order to guarantee that the optimal solution for the $1$-center problem, $\optOne$ satisfies $\optOne=4$ if and only if $x$ and $y$ are disjoint.
Hence, by simulating a $1$-center algorithm and finding the distance from the chosen center to all other nodes, Alice and Bob can decide if $x$ and $y$ are disjoint.
Finally, if the algorithm is too fast, Alice and Bob can simulate it with too little communication, contradicting the communication-complexity lower bound for disjointness.

We now present the graph and the proof in detail.
The graph $G_{x,y}$ consists of node sets $A$, $B$, $F_A$, $T_A$, $F_B$, and $T_B$, as well as nodes $c_A$, $\bar{c}_A$, $c_B$, and $\bar{c}_B$.
The set $A$ is the set of $\len$ nodes $a^0, a^1,\ldots, a^{\len-1}$ and the set $B$ is the set of $\len$ nodes $b^0, b^1,\ldots, b^{\len-1}$.
For $S\in \{A,B\}$, the set $F_S$ consists of $\lfloor \log_2{\len} \rfloor$ nodes $f^{0}_S, f^{1}_S,\ldots, f^{\lfloor\log_2{\len-1} \rfloor}_S$.
Similarly, the set $T_S$ contains node $t^{0}_S, t^{1}_S,\ldots, t^{\lfloor \log_2{\len-1} \rfloor}_S$. 
The edges of the graph $G_{x, y}$ are described in the following.

\begin{itemize}
    \item \textbf{Edges from $A$ to $F_A$ and $T_A$} and \textbf{edges from $B$ to $F_B$ and $T_B$}. Let $0\leq i<\len$ and $0\leq h<\len$, and let $\bin^{i}_h$ be the $h$-th bit in the binary representation of $i$. If $\bin^{i}_h=0$, we connect $a^i$ to $f^h_{A}$ and we connect $b^i$ to $f^h_{B}$. Otherwise, if $\bin^{i}_h=1$, we connect $a^i$ to $t^h_{A}$ and we connect $b^i$ to $t^h_{B}$.
    That is to say, we connect $a^i$ to the binary representation of $i$ in the sets $F_A$ and $T_A$, where $F$ and $T$ represent ``false'' and ``true'';
    $b^i$ is similarly connected.

    \item \textbf{Edges from $\bar{c}_A$ to $A$} and \textbf{edge from $\bar{c}_B$ to $B$}.
    For each $0\leq i < \len$, we connect $a^i$ to $\bar{c}_A$ if and only if $x[i]=1$, and we connect $b^i$ to $\bar{c}_B$ if and only if $y[i]=1$.
    \item \textbf{Other edges from $c_A$, $\bar{c}_A$, $c_B$ and $\bar{c}_B$}.
    We connect $c_A$ to all nodes in $A$ and we connect $c_B$ to all nodes in $B$. We also connect $\bar{c}_A$ to all nodes in $F_A$ and $T_A$, and similarly, we connect $\bar{c}_B$ to all nodes in $F_B$ and $T_B$.
    \item \textbf{Edges between $F_A$, $T_A$, $F_B$, and $T_B$}. For every $0\leq h<\len$, we connect $f^h_A$ to $t^h_A$ and $t^h_B$, and we connect $t^h_A$ to $f^h_B$.
    \item \textbf{Edges from $w^0$, $w^1$ and $w^2$}. We connect $w^0$ to all nodes in $A$, and also $w^0$ to $w^1$ and $w^1$ to~$w^2$.
\end{itemize}

\begin{figure}
    \begin{center}
        \includegraphics[
		trim=3cm 5.5cm 3cm 2.5cm,clip,width=0.95\linewidth]{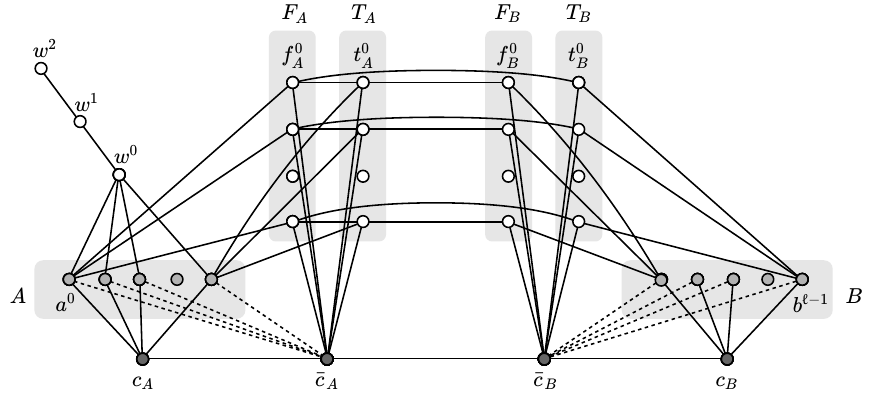}
	\end{center}
    \caption{Illustration of the graph $G_{x, y}$~\cite{AbboudCKP21}.
    The dotted edges depend on the inputs for the disjointness problem.}
    \label{fig:lb:cogest}
\end{figure}

Distances in the graph $G_{x,y}$ presented above have the following properties.

\begin{claim}[{\cite[Claim 4]{AbboudCKP21}}]
\label{claim:Gxy}
    Let $G_{x,y}$ be the graph defined above, and $V$ be the set of its nodes. Then, the following holds
    \begin{enumerate}
        \item \label{claim:Gxy:1} For every node $u\in V\setminus A$, we have $\ecc(u) \geq 4$.
        \item For every node $a^i\in A$ and every $u\in V\setminus \{b^i, c_B\}$ we have $\dist(a^i,u) \leq 3$. \label{claim:Gxy:2}
    \end{enumerate}
\end{claim}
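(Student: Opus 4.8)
The plan is to prove the two statements separately, directly from the explicit adjacency list, first observing that the only input-dependent edges join $A$ to $\bar c_A$ and $B$ to $\bar c_B$. Since those edges stay inside the $A$- and $B$-sides respectively, none of them shortcuts the $F/T$ gadget, so every distance claimed below holds for all $x,y$. For part~\ref{claim:Gxy:2} I would fix $a^i$ and bound $\dist(a^i,u)$ by the type of $u$. The $A$-side is immediate: $c_A$ and $w^0$ are neighbours of $a^i$; every other $a^j$ is reached through $c_A$; the nodes $w^1,w^2$ lie on the path $a^i,w^0,w^1,w^2$; and since $a^i$ has a neighbour in $F_A\cup T_A$ for each bit, $f^h_A$ is joined to $t^h_A$, and $\bar c_A$ is adjacent to all of $F_A\cup T_A$, every node of $\{\bar c_A\}\cup F_A\cup T_A$ is within distance $2$.

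The substantive cases are the $B$-side nodes, handled through the cross-edges joining $t^h_A$ to $f^h_B$ and $f^h_A$ to $t^h_B$. For any bit $h$, one of $f^h_B,t^h_B$ sits two hops from $a^i$ and the other three, so $F_B\cup T_B$ is within distance $3$; I would then reach $\bar c_B$ in exactly three hops by routing through $F_A\cup T_A$ and then $F_B\cup T_B$ (using that $\bar c_B$ is adjacent to all of $F_B\cup T_B$), rather than appending a step to a gadget node, which would overshoot. For $b^j$ with $j\neq i$ the indices disagree in some bit $h$, and that single disagreement produces a length-$3$ path from $a^i$ to $b^j$ through exactly one cross-edge. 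The delicate point, and the reason $b^i$ and $c_B$ are excluded, is the converse: when $i=j$ every bit agrees, so no cross-edge route closes up, and $a^i,b^i$ share no common neighbour, forcing $\dist(a^i,b^i)\ge 4$; likewise $c_B$'s neighbours all lie in $B$ while every $b^j$ is at distance $\ge 3$ from $a^i$ (again by the no-common-neighbour fact), so $\dist(a^i,c_B)\ge 4$.

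For part~\ref{claim:Gxy:1} I would produce, for each $u\notin A$, a witness at distance $\ge 4$. The clean observation is that the radius-$3$ ball around $w^2$ is exactly $\{w^0,w^1,w^2\}\cup A$: the only exit from $w^2$ is the path $w^2,w^1,w^0$, and $w^0$'s remaining neighbours are precisely $A$, reached at distance $3$. Hence $\dist(u,w^2)\ge 4$, so $\ecc(u)\ge 4$, for every $u\in V\setminus(\{w^0,w^1,w^2\}\cup A)$. The three leftover nodes are then handled by the single witness $c_B$: since $c_B$ is adjacent only to $B$ and every $b^j$ is at distance $\ge 4$ from $w^0$ (its nearest $B$-neighbour lies two gadget-hops beyond $A$), each of $w^0,w^1,w^2$ is at distance $\ge 4$ from $c_B$. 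Together these two witnesses cover all $u\notin A$.

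I expect the main obstacle to be the distance lower bounds rather than the routing: verifying the no-common-neighbour facts that yield $\dist(a^i,b^i)\ge 4$ and $\dist(a^i,c_B)\ge 4$, and checking that the two small ball computations in part~\ref{claim:Gxy:1} are genuinely unaffected by the input-dependent $\bar c_A,\bar c_B$ edges. These are exactly the steps where one must argue that nothing in the graph produces an unexpected shortcut, and they are what the eventual disjointness reduction rests on.
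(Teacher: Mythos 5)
Your proofs of the two enumerated items are correct and follow essentially the same route as the paper's own argument: for item~\ref{claim:Gxy:2}, the same gadget routing (one cross-edge per disagreeing bit for $b^j$, $j\neq i$, and $\bar c_B$ reached through $F_B\cup T_B$), and for item~\ref{claim:Gxy:1} the same two witnesses, namely $w^2$ for every node outside $A\cup\{w^0,w^1,w^2\}$ and $c_B$ for $w^0,w^1,w^2$.

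However, the ``delicate point'' you add --- that $\dist(a^i,b^i)\ge 4$ and $\dist(a^i,c_B)\ge 4$ hold unconditionally --- is false, and the error matters. The four special nodes form the path $(c_A,\bar c_A,\bar c_B,c_B)$, so the edge $(\bar c_A,\bar c_B)$ is present, and the input-dependent edges attach $a^i$ to $\bar c_A$ exactly when $x[i]=1$ and $b^i$ to $\bar c_B$ exactly when $y[i]=1$. Hence when $x[i]=y[i]=1$ the path $(a^i,\bar c_A,\bar c_B,b^i)$ has length $3$, and $(a^i,\bar c_A,\bar c_B,c_B)$ has length $3$ whenever $x[i]=1$. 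Your no-common-neighbour observation only yields $\dist(a^i,b^i)\ge 3$; excluding length-$3$ paths is precisely the input-dependent part and cannot be done unconditionally. The nodes $b^i$ and $c_B$ are excluded from item~\ref{claim:Gxy:2} not because their distance to $a^i$ is always at least $4$, but because that distance is $3$ or at least $4$ depending on the input bits --- this is exactly the content of \cref{lem:disjoint:4:3}, and it, rather than your claimed unconditional bounds, is what the disjointness reduction rests on. Were your bounds true, we would have $\optOne(G_{x,y})=4$ for every $x,y$, and the reduction would collapse. Relatedly, $c_B$ is not ``adjacent only to $B$'': it is also adjacent to $\bar c_B$. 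Your item~\ref{claim:Gxy:1} conclusions survive this slip, because routes from $w^0$ into $c_B$ via $\bar c_B$ also have length at least $4$, but the argument as written overlooks them and should be repaired.
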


The first part of the claim holds since the distance of $w^2$ to any node in $V\setminus A \cup \{w^0, w^1, w^2\}$ is at least $4$, and the distance of $c_B$ to any node in $\{w^0, w^1, w^2\}$ is at least $4$. 
To show the second part of the claim, observe that the distance between $a^i$ and any node in $V\setminus B\cup \{c_B\}$ is clearly at most $3$. Besides, if $j\neq i$, there exist $0\leq h<\len$ such that $\bin^i_h \neq \bin^j_h$, where $\bin^i_h$ and $\bin^j_h$ are the $h$-th bit in the binary representation of $i$ and $j$, respectively. If $\bin^i_h=0$ and $\bin^j_h=1$, the path $(a^i,f^h_A,t^h_B,b^j)$ exists. Otherwise, if $\bin^i_h=1$ and $\bin^j_h=0$, the path $(a^i,t^h_A,t^h_B,b^j)$ exists. It means that $\dist(a^i, b^j) \leq 3$, and therefore, the second part of the claim holds.
The rigorous proof of this claim can be found in~\cite[Claim 4]{AbboudCKP21}.

Recall that  $x$ and $y$ are not disjoint if there is a $0\leq i < \len$ such $x[i]=y[i]=1$,
and otherwise, they are disjoint.
In Lemma~\ref{lem:disjoint:4:3}, we show that $\optOne(G_{x,y})=4$ if and only if $x$ and $y$ are disjoint.

\begin{lemma}
\label{lem:disjoint:4:3}
    If the strings $x$ and $y$ are disjoint then $\optOne(G_{x,y})=4$.
    Otherwise, $\optOne(G_{x,y})=3$.
\end{lemma}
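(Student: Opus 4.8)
The plan is to use the elementary identity $\optOne(G_{x,y}) = \min_{u \in V}\ecc(u)$, since a $1$-center solution simply places the single center at a node of minimum eccentricity. I would then determine this minimum through \cref{claim:Gxy}. Its first part already tells us that every $u \in V\setminus A$ has $\ecc(u)\ge 4$, so the minimum eccentricity is controlled by the nodes of $A$. Its second part tells us that for a node $a^i \in A$ the only two nodes that can possibly lie at distance more than $3$ from it are $b^i$ and $c_B$; consequently $\ecc(a^i)$ is completely determined by the two distances $\dist(a^i,b^i)$ and $\dist(a^i,c_B)$, together with the upper bound $\ecc(a^i)\le 4$, which I would record by exhibiting explicit length-$4$ paths from $a^i$ to $b^i$ and to $c_B$ through the bit-gadget nodes.

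The heart of the argument is to pin down these two distances as functions of the input bits. First I would establish $\dist(a^i,c_B)\ge 3$ for every $i$ (the two nodes are non-adjacent and share no common neighbour), so that $\ecc(a^i)\ge 3$ unconditionally, and moreover that $\dist(a^i,c_B)=3$ precisely when $x[i]=1$ (via the short path leaving $a^i$ through $\bar c_A$ to $\bar c_B$ and then to $c_B$) and $\dist(a^i,c_B)=4$ otherwise. Second, I would show $\dist(a^i,b^i)=3$ precisely when $x[i]=y[i]=1$, since the only length-$3$ route is $a^i\to\bar c_A\to\bar c_B\to b^i$, whose first and last edges exist exactly when $x[i]=1$ and $y[i]=1$ respectively, and $\dist(a^i,b^i)=4$ in all other cases. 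Combining the two computations yields the clean characterization $\ecc(a^i)=3$ if and only if $x[i]=y[i]=1$, and $\ecc(a^i)=4$ in every other case.

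The lemma then follows by a case split. If $x$ and $y$ are disjoint, no index satisfies $x[i]=y[i]=1$, so every $a^i$ has $\ecc(a^i)=4$; together with $\ecc(u)\ge 4$ for all $u\notin A$ from the first part of \cref{claim:Gxy}, the minimum eccentricity is $4$, giving $\optOne(G_{x,y})=4$. If $x$ and $y$ are not disjoint, I fix an index $i^*$ with $x[i^*]=y[i^*]=1$; then $\ecc(a^{i^*})=3$, so $\optOne\le 3$, while the unconditional bound $\dist(a^i,c_B)\ge 3$ for every $i$, combined again with the first part of \cref{claim:Gxy}, shows that every node has eccentricity at least $3$, whence $\optOne\ge 3$ and therefore $\optOne(G_{x,y})=3$.

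I expect the main obstacle to lie in the distance lower bounds rather than the upper bounds. Producing the short connecting paths is routine, but proving that no shorter path exists — namely that $\dist(a^i,c_B)\ge 3$ always, that $\dist(a^i,c_B)=4$ when $x[i]=0$, and that $\dist(a^i,b^i)=4$ unless both bits are set — requires a careful exhaustive inspection of the neighbourhoods of $a^i$, $b^i$, $c_B$ and the gadget nodes $f^h_A,t^h_A,f^h_B,t^h_B$ to rule out any accidental shortcut through the clustering nodes or the central path. This case analysis is the one place where the exact adjacency structure of $G_{x,y}$ must be used in full, everything else being a bookkeeping consequence of \cref{claim:Gxy}.
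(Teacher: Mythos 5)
Your proposal is correct and follows essentially the same route as the paper's proof: both arguments reduce the problem, via the two parts of \cref{claim:Gxy}, to analyzing the distances $\dist(a^i,b^i)$ and $\dist(a^i,c_B)$ and showing that the only possible length-$3$ connection between $a^i$ and $b^i$ passes through the edge $(\bar c_A,\bar c_B)$, so that eccentricity $3$ is achieved exactly when $x[i]=y[i]=1$. Your version is organized as a complete characterization of $\ecc(a^i)$ for all $i$ (and, unlike the paper, makes the lower bound $\optOne\geq 3$ in the non-disjoint case explicit), but the key structural facts and the use of \cref{claim:Gxy} are identical.
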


\begin{proof}
    To prove the lemma, we first show that if $\optOne < 4$, then an optimal center of $G_{x,y}$ for the $1$-center problem is a node $a^i\in A$, such that $x[i]=y[i]=1$.
    Assume that $\optOne < 4$. Then Claim~\ref{claim:Gxy}(\ref{claim:Gxy:1}) implies that the optimal center is in $A$. Let $a^i\in A$ be an optimal center. Then, $\dist(a^i, b^i) \leq 3$ should holds since we assume $\optOne < 4$.
    But $\dist(a^i, b^i) \leq 3$ holds only if $a^i$ is connected to $\bar{c}_A$ and $b^i$ is connected to $\bar{c}_B$, which means $x[i]=y[i]=1$.

    We next consider the case that $x$ and $y$ are disjoint. 
    In this case, we have $\optOne \geq 4$ since otherwise, our claim implies that there exists $0\leq i<\ell$ such that $x[i]=y[i]=1$ and contradicts the disjointness of $x$ and $y$. Hence, $\optOne \geq 4$ holds. On the other hand, all nodes are within distance~$4$ of $a^0$, and therefore, $\optOne = 4$.
    
    Now, we consider the case that $x$ and $y$ are not disjoint. This means that there exist $0\leq i < \ell$
    such that $x[i]=y[i]=1$. We show that $\ecc(a^i)=3$ and then $\optOne=3$. Claim~\ref{claim:Gxy}(\ref{claim:Gxy:2}) 
    states that $\dist(a^i,u)\leq 3$ holds for any $u\in V\setminus \{b^i, c_B\}$. Besides, $a^i$ is connected to $\bar{c}_A$ as $x[i]=1$ and $b^i$ is connected to $\bar{c}_B$ since $y[i]=1$. Thus, $\dist(a^i, b^i)=\dist(a^i, c_B)=3$. Putting everything together we have $\optOne = 3$.
\end{proof}

\begin{theorem}
	\label{thm:lb-for-1-cntr}
Any \cgst algorithm that returns an $\alpha$-approximation for the $1$-center problem with $\alpha<4/3$, even randomized,
must take $\Omega(n/\log^2 n)$ rounds to complete.
\end{theorem}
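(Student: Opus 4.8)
The plan is to reduce the set-disjointness communication problem to the task of computing an $\alpha$-approximate $1$-center on the graph $G_{x,y}$, and then invoke the $\Omega(\len)$ communication lower bound for disjointness. Two ingredients drive the reduction: an \emph{approximation-gap} argument that turns an $\alpha$-approximate solution (with $\alpha<4/3$) into a decision of disjointness, and a \emph{simulation} argument that lets Alice and Bob run any fast \cgst algorithm on $G_{x,y}$ while exchanging few bits, because the cut separating Alice's part of the graph from Bob's part is narrow.

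For the gap argument, recall that \cref{lem:disjoint:4:3} gives $\optOne(G_{x,y})=4$ when $x,y$ are disjoint and $\optOne(G_{x,y})=3$ otherwise. Since distances are integers and $\alpha<4/3$, an $\alpha$-approximate center $s$ satisfies $\ecc(s)\le\alpha\cdot 3<4$, hence $\ecc(s)\le 3$, in the non-disjoint case, while $\ecc(s)\ge\optOne=4$ in the disjoint case. Thus deciding whether $\ecc(s)\le 3$ decides disjointness. To carry out this decision cheaply I would use the structure of $G_{x,y}$: by the first part of \cref{claim:Gxy} every $u\notin A$ has $\ecc(u)\ge 4$, so if the returned center is not in $A$ the instance must be disjoint; and if the returned center is some $a^i\in A$, then $\ecc(a^i)=3$ holds precisely when $\dist(a^i,b^i)=\dist(a^i,c_B)=3$, i.e. when $x[i]=y[i]=1$. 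Hence, once the identity of the center is known, a single bit exchanged between Alice (who knows $x[i]$) and Bob (who knows $y[i]$) suffices to decide disjointness: output ``not disjoint'' exactly when the center is $a^i$ with $x[i]=y[i]=1$.

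For the simulation, partition the nodes so that Alice owns $A\cup F_A\cup T_A\cup\{c_A,\bar c_A,w^0,w^1,w^2\}$ and Bob owns $B\cup F_B\cup T_B\cup\{c_B,\bar c_B\}$. All input-dependent edges ($a^i$--$\bar c_A$, governed by $x$; $b^i$--$\bar c_B$, governed by $y$) lie strictly inside one side, so each player knows every edge incident to its own nodes. The only edges crossing the cut are the input-independent edges $f^h_A$--$t^h_B$ and $t^h_A$--$f^h_B$ for each of the $\lfloor\log_2\len\rfloor$ indices $h$, together with the constant number of edges joining the four-node path (e.g. $\bar c_A$--$c_B$); thus the cut has $C=O(\log n)$ edges. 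Both players maintain the full global state of one execution: to advance a round, each sends the other the $O(\log n)$-bit messages that its nodes place on the crossing edges, costing $O(C\log n)=O(\log^2 n)$ bits per round. After simulating all $T$ rounds (a total of $O(T\log^2 n)$ bits) both know the output, in particular the center $s$, and they finish with the $O(1)$-bit post-processing described above.

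Putting it together, a \cgst $1$-center algorithm with $\alpha<4/3$ running in $T$ rounds yields a disjointness protocol communicating $O(T\log^2 n)+O(1)$ bits. Since $n=\Theta(\len)$ and disjointness requires $\Omega(\len)=\Omega(n)$ bits even for bounded-error randomized protocols (the players need only private coins, using independent randomness for their own nodes, which preserves the success probability), we conclude $T=\Omega(n/\log^2 n)$. The step I expect to be most delicate is the gap/post-processing analysis: arguing rigorously that the center's identity together with one input bit determines disjointness in \emph{every} case, including ruling out a spurious ``not disjoint'' verdict in the disjoint case where the returned center may have eccentricity $4$ or somewhat larger. The cut-and-simulation accounting, by contrast, is standard once the $O(\log n)$ bound on the number of crossing edges is established.
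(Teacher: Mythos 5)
Your proposal is correct and follows essentially the same route as the paper's proof: the same Alice/Bob partition of $G_{x,y}$ with an $O(\log n)$-edge cut, the same round-by-round simulation costing $O(\log^2 n)$ bits, the same post-processing rule (answer ``not disjoint'' exactly when the center is some $a^i$ with $x[i]=y[i]=1$), and the same invocation of the $\Omega(\ell)$ randomized disjointness bound. Only two cosmetic points differ: each player can maintain only the state of \emph{its own} nodes (not the ``full global state,'' since Alice does not know $y$), and the center's identity must be communicated explicitly, costing $O(\log n)$ rather than $O(1)$ bits --- neither affects the asymptotics.
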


\begin{proof}
	Assume for contradiction a \cgst algorithm faster than in the theorem's statement.
	Let $x,y\in \{0,1\}^\ell$ be two inputs of Alice and Bob for the disjointness problem.
	Alice and Bob simulate the algorithm on the graph $G_{x,y}$ described above.
	
	To this end, we split the graph node by $V_A=A\cup F_A\cup T_A \cup\{c_A,\bar C_a, w^0,w^1,w^2\}$ and $V_B=V\setminus V_A$
	(the nodes on the left and right sides of \cref{fig:lb:cogest}, respectively).
	Alice is in charge of the nodes of $V_A$ and Bob on $V_B$.
	To simulate a round, they locally simulate the exchange of messages in each of their node sets, and exchange bits to simulate the messages between nodes of $V_A$ and $V_B$.
	Since there are $O(\log n)$ edges between the sets, a simulation of a round requires $O(\log^2 n)$ bits of communication.
	
	If the output of the algorithm is a node not in $A$, Alice and Bob return $1$ for the disjointness problem.
	If it is some node $a^i\in A$,
	the exchange the bits $x[i]$ and $y[i]$ (for the same index $i$) and return $0$ for the disjointness problem if and only if $x[i]=y[i]=1$.
	
	For correctness, first note that Alice and Bob return $0$ for the disjointness problem only if they find an index such that $x[i]=y[i]=1$, so this answer must be correct.
	When they return $1$, on the other hand, it might be since the algorithm returned a center $u\notin A$, or a center $a^i$ such that $x[i]=0$ or $y[i]=0$ (or both).
	
	If Alice and Bob return $1$ because of a center $u\notin A$, then \cref{claim:Gxy}(\ref{claim:Gxy:1}) guarantees that $\ecc(u)\geq 4$,
	and by $\alpha<4/3$ we get that $\optOne>3$.
	Have the sets not been disjoint, \cref{lem:disjoint:4:3} guarantees that $\optOne=3$, a contradiction.
	
	If Alice and Bob return $1$ because of a center $a^i\in A$, then we also know that $x[i]=0$ or $y[i]=0$.
	Note that $\dist(a^i,b^i)\geq 4$:
	the nodes of $\bin(a^i)$ and $\bin(b^i)$ are never neighbors, so any $3$-path connecting $a^i$ and $b^i$ must go through the edge $(\bar c_A,\bar c_B)$; but since $x[i]=0$ or $y[i]=0$, no such path can exists.
	Hence, $\ecc(a^i)\geq 4$.
	As before, 
	if the sets were not disjoint, \cref{lem:disjoint:4:3} guarantees that $\optOne=3$, and we would have got a contradiction.
	
	For the complexity,
	let $T$ be the number of rounds used by the algorithm.
	To simulate these rounds, Alice and Bob exchange $O(T\log^2n)$ bits.
	As shown above, they solve the disjointness problem on $\ell$ bits, which requires them to communicate $\Omega(\ell)$ bits, even when using a randomized algorithm.
	As $\ell=\Theta(n)$, we get
	$T=\Omega(n/\log^2 n)$.
	In fact, Alice and Bob must also exchange the id of the node chosen as the center, but these $\log n$ bits of communication do not affect the asymptotic complexity.
	
	Finally, note that Alice and Bob cannot solve disjointness solely by the output of the algorithm, and need extra communication after it.
    Hence, we cannot utilize standard reductions such as~\cite[Theorem~6]{AbboudCKP21} and use a non-standard one.
\end{proof}

\subsection{Extending the Lower Bound to the $k$-Center Problem}
\label{sec:lb:congest:kcenter}

To generalize the lower bound for the $1$-center problem to the $k$-center problem, we consider $k$ copies of the graph $G_{x,y}$ introduced in Section~\ref{sec:lb:congest:1center},
such that all copies share the node $w^2$. We refer to this graph as $G^k_{x,y}$.

\begin{claim}
\label{cliam:lb:1in:each:copy}
    In any $\alpha$-approximate solution for $k$-center of $G^k_{x,y}$, $\alpha<3/2$, exactly one center is chosen from each copy.
\end{claim}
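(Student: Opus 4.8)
The plan is to reduce the statement to the single assertion that \emph{every} one of the $k$ copies contains at least one center. Once that is proved, a counting argument closes the proof: the private node sets of the $k$ copies are pairwise disjoint, so if each of them holds at least one of the at most $k$ centers, then each holds \emph{exactly} one and, in particular, no center is wasted on the shared node $w^2$. The structural feature I would rely on throughout is that $w^2$ is the unique node common to all copies, hence a cut vertex separating the private nodes of any copy from the rest of $G^k_{x,y}$. Consequently, for a node $v$ lying in copy $j$ and any center $s$ outside copy $j$ (including the possibility $s=w^2$), every $v$–$s$ path passes through $w^2$, so $\dist(v,s)=\dist(v,w^2)+\dist(w^2,s)\ge \dist(v,w^2)$.

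Two quantitative ingredients feed into this. First, an upper bound $\opt(G^k_{x,y})\le 4$: placing the node $a^0$ of each copy into the center set covers that whole copy within distance $4$ (as established in the proof of \cref{lem:disjoint:4:3}, every node of a single $G_{x,y}$ lies within distance $4$ of $a^0$, using \cref{claim:Gxy}) and also covers the shared $w^2$ within distance $3$. Second, a lower bound on how far a center-less copy is from any outside help: I would show each copy contains a node at distance at least $6$ from $w^2$, and a natural witness is any $b^i\in B$ of that copy. Since the only route out of $w^2$ is the path $w^2 \to w^1 \to w^0$ and $w^0$ borders only the set $A$, one has $\dist(b^i,w^2)=\dist(b^i,w^0)+2$; and since no neighbour of $b^i$ is adjacent to any node of $A$, one gets $\dist(b^i,A)\ge 3$, whence $\dist(b^i,w^0)\ge 4$ and therefore $\dist(b^i,w^2)\ge 6$.

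Combining these, suppose for contradiction that some copy $j$ contains no center. Taking $v=b^i$ in copy $j$, the cut-vertex observation yields $\dist(v,s)\ge\dist(v,w^2)\ge 6$ for \emph{every} center $s$, so the radius of the reported solution is at least $6$. Together with $\opt\le 4$, this forces an approximation ratio of at least $6/4=3/2$, contradicting $\alpha<3/2$. Hence every copy contains a center, and the counting argument above then gives exactly one center per copy. I expect the main obstacle to be the second ingredient: rigorously deriving $\dist(b^i,w^2)\ge 6$ from the explicit edge set of $G_{x,y}$, and doing the bookkeeping that makes $w^2$ behave as a genuine cut vertex so that the distance from a center-less copy to all external centers is bounded uniformly below—taking care of the borderline case where a center sits exactly on $w^2$.
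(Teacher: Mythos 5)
Your proposal is correct and follows essentially the same route as the paper's proof: bound $\opt(G^k_{x,y})\le 4$ via \cref{lem:disjoint:4:3}, observe that a copy with no center has a node of $B$ at distance at least $6$ from every center (through the shared cut vertex $w^2$), derive the contradictory ratio $6/4=3/2$, and conclude exactly one center per copy by counting. The only difference is that you spell out details the paper leaves implicit, namely the derivation of $\dist(b^i,w^2)\ge 6$ from the edge set and the borderline case where $w^2$ itself is chosen as a center.
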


\begin{proof}
    Note that
    \cref{lem:disjoint:4:3} guarantees that in each copy we can choose a center and get a solution with distance at most $4$ from each node to a center, and $\opt\leq4$.
    If some copy contains no center, then $b^0$ in this copy has distance at least $6$ to $w^2$, and the solution is at least $(6/4)$-approximation, a contradiction.
\end{proof}

\begin{theorem}
	Any \cgst algorithm that returns an $\alpha$-approximation for the $k$-center problem with $\alpha<4/3$, even randomized,
	must take $\Omega(n/(k^2 \log^2 n))$ rounds to complete.
\end{theorem}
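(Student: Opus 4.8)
The plan is to mirror the single-copy reduction of \cref{thm:lb-for-1-cntr}, but to embed the \emph{same} disjointness instance $(x,y)\in\{0,1\}^\ell$ into all $k$ copies of $G_{x,y}$ that make up $G^k_{x,y}$. Since each copy has $\Theta(\ell)$ nodes and the copies share only $w^2$, the total size is $n=\Theta(k\ell)$, so $\ell=\Theta(n/k)$. As before, Alice owns the ``$A$-side'' $V_A$ of every copy together with the shared node $w^2$, and Bob owns the ``$B$-side'' $V_B=V\setminus V_A$ of every copy; the input-dependent edges $a^i\bar c_A$ and $b^i\bar c_B$ are set according to $x$ and $y$ identically in each copy, so each player can build its part locally.

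First I would pin down $\opt$ as a function of disjointness. Because $\alpha<4/3<3/2$, \cref{cliam:lb:1in:each:copy} guarantees that any $\alpha$-approximate solution places exactly one center in each copy; since reaching another copy forces a detour through $w^2$, cross-copy distances never help, and the cost of the solution is simply the maximum over copies of the within-copy eccentricity of the chosen center. Applying \cref{lem:disjoint:4:3} copy-by-copy then gives $\opt=4$ when $x,y$ are disjoint and $\opt=3$ when they are not (in the latter case the common witness $a^i$ can be selected in every copy). Consequently an $\alpha$-approximation with $\alpha<4/3$ outputs a solution of cost at most $3$ in the non-disjoint case and of cost at least $4$ in the disjoint case.

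With this in hand, the decision rule localizes to copy~$1$ exactly as in \cref{thm:lb-for-1-cntr}. Alice and Bob simulate the assumed algorithm on $G^k_{x,y}$; a single round now costs $O(k\log^2 n)$ bits, since each of the $k$ copies contributes $O(\log n)$ cut edges, each carrying an $O(\log n)$-bit message. After the algorithm halts, the owner of copy~$1$'s unique center $s_1$ announces it using $O(\log n)$ bits. If $s_1\notin A$, they declare the sets disjoint; if $s_1=a^i$, they exchange the two bits $x[i],y[i]$ and declare the sets non-disjoint iff both equal $1$. Correctness is inherited verbatim from the $1$-center argument: a ``non-disjoint'' answer is given only on a genuine witness, and a ``disjoint'' answer is forced because otherwise $s_1$ would have eccentricity at least $4$ within copy~$1$ (by \cref{claim:Gxy}(\ref{claim:Gxy:1}) when $s_1\notin A$, or because $\dist(a^i,b^i)\geq 4$ when $x[i]=0$ or $y[i]=0$), contradicting the cost-at-most-$3$ bound that non-disjointness would impose.

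Finally I would turn the simulation into the round lower bound. Solving disjointness on $\ell$ bits requires $\Omega(\ell)=\Omega(n/k)$ bits of communication, even for randomized protocols, while simulating $T$ rounds plus the $O(\log n)$-bit post-processing costs only $O(Tk\log^2 n)$ bits; equating the two yields $T=\Omega(n/(k^2\log^2 n))$, and the randomized bound survives because the disjointness lower bound does. The main obstacle, and the place the argument genuinely differs from the $1$-center case, is establishing that the cost of a good approximation decomposes per copy: this is exactly what \cref{cliam:lb:1in:each:copy} buys us (one center per copy), and it is what lets the single-copy correctness and the per-copy value of $\opt$ transfer; the only other new ingredient is tracking the extra factor of $k$ in the per-round simulation cost arising from the $k$ parallel cuts.
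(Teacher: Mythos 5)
Your proof is correct and follows essentially the same reduction as the paper's: the same $k$-copy graph $G^k_{x,y}$ sharing $w^2$ with identical inputs in every copy, the same Alice/Bob partition and $O(Tk\log^2 n)$-bit simulation, and the same reliance on \cref{cliam:lb:1in:each:copy} and \cref{lem:disjoint:4:3} to force one center per copy and tie the solution's cost to disjointness. The only (harmless) difference is your post-processing, which inspects only copy~1's center using $O(\log n)$ bits, whereas the paper examines the centers of all copies at a cost of $O(k\log n)$ bits; both costs are dominated by the simulation and yield the identical bound.
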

The proof extends the proof of \cref{thm:lb-for-1-cntr} by using the graph $G^k_{x,y}$ described above.
Note that all copies are for the same input pair $(x,y)$.

\begin{proof}
	Assume a faster algorithm than in the theorem statement.
	Alice and Bob simulate this algorithm for $\ell$ input bits, where $\ell=\Theta(n/k)$,
	and get an output at the form of $k$ center nodes.
	If none of the output nodes is of the form $a^i\in A$ for some copy, Alice and Bob answer $1$ for the disjointness problems.
	Otherwise, for every node $a^i\in A$ in the output (perhaps from different copies), they exchange $x[i],y[i]$,
	and answer $0$ for the disjointness problem only if $x[i]=y[i]=1$ for at least one such $i$.
	
	For correctness, recall first that \cref{cliam:lb:1in:each:copy} guarantees that each copy contains exactly one center.
	
	If Alice and Bob answer $0$, this is because they found an index $i$ of intersection and the algorithm is correct.
	If they answer $1$, the proof follows the same lines of the proof of \cref{thm:lb-for-1-cntr}: 
	if the sets are not disjoint then \cref{lem:disjoint:4:3} guarantees each copy has a center $a^i$ with $\ecc(a^i)\leq 3$, and we get $\opt\leq 3$.
	On the other hand, the algorithm returns at each copy a node $u\notin A$ or a node $a^i$ with $x[i]=0$ or $y[i]=0$, which has $\dist(a^i,b^i)\geq 4$;
	in both cases, the solution is not an $\alpha$-approximation for $\alpha<4/3$.
	
	For the complexity, assume the algorithm takes $T$ rounds.
	Alice and Bob communicate, for 
	each copy and each round, $O(\log (n/k) \log n)$ bits, which are $O(\log^2n)$ bits.
	At the end of the algorithm, they may exchange the indices of at most $k$ centers of the form $a^i$ and their inputs $x[i]$ and $y[i]$ in these locations, with takes requires another $O(k\log n)$ bits.
	In total, they exchange
	$O(Tk\log^2 n +k\log n)$ bits,
	which is $O(Tk\log^2 n)$.

	On the other hand, they solve disjointness on $\ell=\Theta(n/k)$ input bits,
	so they must communicate $\Omega (n/k)$ bits.
	Hence,
	\[
		Tk\log^2 n\geq cn/k
	\]
	for some constant $c$, and we get
	\[
	T=\Omega\left( \frac{n}{k^2\log^2n}\right)
	\]
	as claimed.
\end{proof}

\section{A $2$-Approximation Algorithm in the \clq Model}
\label{sec:clique}
We now present Algorithm~\ref{alg:clique}, which provides an approximate solution to the $k$-center clustering problem in the \clq model.
It is applicable both when the distance between the nodes are weighted and when they are unweighted.

The algorithm consists of two phases. 
The first phase is computing all-pairs shortest distances or approximating them, using known algorithms.

The second phase is to greedily find the centers, which is done in additional $k$ rounds.
In this phase, we first find $v_{\min}$, the node with the minimum \id.
This is trivial when the ids are in $\{1,\ldots,n\}$, but easy also without this assumption, by having each node send its \id to all other nodes.
We then set $S\gets\{v_{\min}\}$. Next, we have $k-1$ iteration,
and in each iteration we find $v^*$, which is the furthest node from $S$ (if there is more than one furthest node, we define $v^*$ as the furthest node with minimum \id). 
To accomplish this, it is enough that each node sends its distance to $S$ to all nodes. 
Hence, all nodes can know who $v^*$ is in one communication round. 
At the end of each iteration, we set $S$ to $S\cup \{v^*\}$, and report $S$ as the set of centers at the end of the algorithm.
This process is formalized in the following algorithm.

\begin{algorithm}[H]
\caption{Varying approximation ratios in the \clq model}\label{alg:clique}
\begin{algorithmic}[1]
    \State Compute all-pairs shortest paths
    \State Let $v_{\min}$ be the node with minimum \id
    \State $S\gets \{v_{\min} \}$
    \For{$k-1$ times}
        \State Let $v^*$ be the furthest node to $S$
        \State $S \gets S \cup \{v^*\}$
    \EndFor
    \State Each node in $S$ marks itself as a center
\end{algorithmic}
\end{algorithm}
 

If we calculate the exact all-pairs shortest paths in the first phase of Algorithm~\ref{alg:clique}, Lemma~\ref{lem:greedy:2approx} indicates that the set $S$ computed by the algorithm is a $2$-approximate solution for \kc. 

Approximate distance can be computed much faster than exact ones, and we next show that 
Algorithm~\ref{alg:clique} can also work with approximate distances, in which case it computes an approximate \kc solution.
To prove this, we now prove  Lemma~\ref{lem:greedy:2+eps}, which extends Lemma~\ref{lem:greedy:2approx} to the scenario where only a multiplicative $\alpha$-approximation with one-sided error of all-pairs shortest paths is computed in the first phase.

\begin{lemma}
\label{lem:greedy:2+eps}
    For any $\alpha\geq 1$, if an $\alpha$-approximation of all-pairs shortest paths is computed in the first phase of Algorithm~\ref{alg:clique}, then the set $S$ obtained by the algorithm is a $(2\alpha)$-approximate solution for the $k$-center problem.
\end{lemma}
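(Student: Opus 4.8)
The plan is to replay the analysis of \citet{DBLP:journals/tcs/Gonzalez85} underlying \cref{lem:greedy:2approx}, but to carefully track the gap between the approximate distances $\hat{\dist}$ that the algorithm uses to select centers and the true distances $\dist$ by which the output is measured. Write the one-sided guarantee as $\dist(u,v)\le\hat{\dist}(u,v)\le\alpha\,\dist(u,v)$ for all $u,v$, and let $S=\{v_1,\dots,v_k\}$ be the greedy set, where $S_i=\{v_1,\dots,v_i\}$, $\hat{\dist}(v,S_i)=\min_{s\in S_i}\hat{\dist}(v,s)$, and each $v_{i+1}$ maximises $\hat{\dist}(\cdot,S_i)$. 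I would also introduce the conceptual ``next pick'' $v_{k+1}$ maximising $\hat{\dist}(\cdot,S_k)$, and set $\hat r=\hat{\dist}(v_{k+1},S_k)=\max_{v}\hat{\dist}(v,S_k)$, the approximate radius of $S$.

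First I would establish the lower-bound side entirely in the $\hat{\dist}$ world. Since $S_i\subseteq S_{i+1}$ gives $\hat{\dist}(\cdot,S_{i+1})\le\hat{\dist}(\cdot,S_i)$ pointwise, the values $\hat{\dist}(v_j,S_{j-1})$ are non-increasing in $j$, so $\hat r\le\hat{\dist}(v_j,S_{j-1})$ for every $j\le k+1$. Moreover, for any $i<j$ we have $v_i\in S_{j-1}$, whence $\hat{\dist}(v_j,S_{j-1})\le\hat{\dist}(v_i,v_j)$ by definition of the minimum. Combining the two yields $\hat{\dist}(v_i,v_j)\ge\hat r$ for all $1\le i<j\le k+1$. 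I note that this step uses only the definition of distance-to-a-set and invokes no triangle inequality.

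Next I would convert $\hat r$ into a bound in terms of $\opt$ by the pigeonhole argument. Fixing an optimal solution, among the $k+1$ points $v_1,\dots,v_{k+1}$ two of them, say $v_i$ and $v_j$, are served by a common optimal center $c$, so the triangle inequality \emph{for the true metric} gives $\dist(v_i,v_j)\le\dist(v_i,c)+\dist(c,v_j)\le2\opt$. The crucial point here is that one cannot apply the triangle inequality to $\hat{\dist}$ directly, since the approximate distances need not form a metric; instead I first pass to the true distance and only then re-inflate: $\hat r\le\hat{\dist}(v_i,v_j)\le\alpha\,\dist(v_i,v_j)\le2\alpha\,\opt$. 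Finally I would translate $\hat r$ back into the true cost of $S$: for every node $v$ the one-sided guarantee gives $\dist(v,S)\le\hat{\dist}(v,S)\le\hat r$, so $\max_v\dist(v,S)\le\hat r\le2\alpha\,\opt$, which is exactly the claimed $(2\alpha)$-approximation.

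The only genuine obstacle is the interplay between the two uses of the metric: the selection and the pairwise lower bound live in the $\hat{\dist}$ world, whereas the pigeonhole upper bound must detour through the true metric because $\hat{\dist}$ may violate the triangle inequality. Keeping the two directions of $\dist\le\hat{\dist}\le\alpha\,\dist$ straight at each step—inflating by $\alpha$ exactly once, in the upper-bound conversion—is what makes the final constant come out as $2\alpha$ rather than $2\alpha^2$.
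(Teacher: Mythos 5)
Your proof is correct and takes essentially the same route as the paper's: both define $\hat{r}$ as the maximum approximate distance to $S$, observe that the $k+1$ nodes consisting of $S$ and the furthest node are pairwise far apart, and then apply pigeonhole against an optimal solution using the triangle inequality of the \emph{true} metric, inflating by $\alpha$ exactly once to get $\opt \geq \hat{r}/(2\alpha)$ while the cost of $S$ is at most $\hat{r}$. The only difference is presentational — the paper divides by $\alpha$ early (true pairwise distances are at least $\hat{r}/\alpha$), whereas you keep the pairwise bounds in the $\hat{\dist}$ world and convert to true distances only inside the pigeonhole step.
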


\begin{proof}
    If $n\leq k$, the algorithm trivially returns all the nodes as the centers, so we assume $n > k$ in the following. 
    Let $S$ be the set returned by Algorithm~\ref{alg:clique}, $\hat{v} \in V\setminus S$ the furthest node from~$S$, and $\hat{r}$ its distance. 

    First, we claim that $\opt \geq \hat{r}/(2\alpha)$. 
    Since $\hat{v}$ is not added to $S$, we can conclude that the distance between any two nodes in $S$ is at least $\hat{r}/\alpha$. Thus, the distance between any two nodes in $S\cup \{\hat{v}\}$ is at least $\hat{r}/\alpha$ as well. 
    In addition, in any optimal solution there exist two nodes in $S\cup \{\hat{v}\}$ that have the same nearest center in that optimal solution since $|S \cup \{\hat{v}\}|=k+1$. 
    Since the distance between these two nodes is at least $\hat{r}/\alpha$, at least one of them has distance at least $\hat{r}/(2\alpha)$ from their common center, implying $\opt \geq \hat{r}/(2\alpha)$.
    On the other hand, all nodes in $V$ are within distance $\hat{r}$ from $S$ by the definition of $\hat{r}$. Hence, $S$ is a solution with the approximation ratio of at most $\frac{\hat{r}}{\hat{r}/(2\alpha)}=2\alpha$. 
\end{proof}

All is left now is to plug fast \clq distance computation algorithms~\cite{ChechikZ22,Censor-HillelDK21,DoryP22,Censor-HillelKK19} in the lemma, and we get fast algorithms for exact and approximate \kc.
This yields the results detailed in 
Theorem~\ref{thm:clique:alg} and \cref{tab:kc times}, showing that \kc can be approximated in the same times as all pairs shortest paths computation, up to an additive $O(k)$ time.
Note that for the specific case of $(2+\epsilon)$-approximation, a much faster (randomized) algorithm exists~\cite{BandyapadhyayIP22}, running in $O(\poly\log n)$ rounds.

\begin{theorem}
\label{thm:clique:alg}
    There exists a deterministic \clq algorithm
    for the $k$-center problem which gives a 
    $2$-approximation
    in $O(n^{1/3}+k)$ rounds on weighted graphs
    and
    $O(n^{0.158}+k)$ rounds on unweighted graphs.
    
    For every constant $0<\epsilon<1$, there exist deterministic algorithms
     that give a 
    $(2+\epsilon)$-approximation
    in $O(n^{0.158}+k)$ rounds on unweighted graphs,
    a $(4+\epsilon)$-approximation
    in $O(\log^4 \log n +k)$ rounds on unweighted graphs, and
    a $(6+\epsilon)$-approximation in
     $O(\log^2 n +k)$ rounds on weighted graphs.
    
    There exists a randomized algorithm
    which gives a 
    $O(\log n)$-approximation
    in $O(k)$ rounds on both weighted and unweighted graphs.
\end{theorem}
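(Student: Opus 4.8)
The plan is to exploit the clean split of \cref{alg:clique} into a distance-computation phase and a greedy-selection phase, and to obtain every row of the statement simply by feeding a different distance oracle into phase~1. First I would verify that the second (greedy) phase always costs only $O(k)$ rounds regardless of the oracle used: computing $v_{\min}$ takes one round (each node broadcasts its \id), and each of the $k-1$ iterations takes a single round, since after phase~1 every node already knows its distance to every other node, hence to the current set $S$, and it only has to broadcast that one value so that all nodes can agree on $v^*$. Consequently the total running time is always (phase-1 time) $+\,O(k)$, and only the approximation ratio and the phase-1 cost change between the claims.

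Next I would combine the two greedy guarantees already proved. By \cref{lem:greedy:2approx}, if phase~1 computes \emph{exact} all-pairs shortest paths then the output is a $2$-approximation; by \cref{lem:greedy:2+eps}, if phase~1 computes a one-sided multiplicative $\alpha$-approximation of the distances then the output is a $(2\alpha)$-approximation. Thus each line of the theorem (and of \cref{tab:kc times}) reduces to quoting a known \clq distance algorithm whose stretch $\alpha$ satisfies $2\alpha=\rho$ for the target ratio $\rho$ and whose round count matches the stated phase-1 bound. Concretely: exact APSP \cite{Censor-HillelKK19} in $O(n^{1/3})$ (weighted) and $O(n^{0.158})$ (unweighted) rounds yields the $2$-approximations, and the $(2+\epsilon)$ unweighted bound is then immediate because a $2$-approximation is a fortiori a $(2+\epsilon)$-approximation; a one-sided $(2+\epsilon/2)$-approximate unweighted APSP in $O(\log^4\log n)$ rounds gives $2\alpha=4+\epsilon$; a one-sided $(3+\epsilon/2)$-approximate weighted APSP in $O(\log^2 n)$ rounds gives $2\alpha=6+\epsilon$; and a randomized $O(\log n)$-stretch distance computation in $O(1)$ rounds gives the $O(\log n)$-approximation in $O(k)$ rounds. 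For each I would cite the matching algorithm from \cite{ChechikZ22,Censor-HillelDK21,DoryP22} and add the $O(k)$ term from the greedy phase.

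The step requiring the most care is not the arithmetic $2\alpha=\rho$ but the \emph{direction} of the distance approximation: \cref{lem:greedy:2+eps} crucially assumes a one-sided guarantee, i.e.\ the oracle never underestimates, $\dist(u,v)\le\hat{\dist}(u,v)\le\alpha\,\dist(u,v)$, so that the realized solution cost is at most the reported radius $\hat r$ while $\opt$ is bounded below through the overestimates. I would therefore invoke precisely the one-sided variants of the cited approximate-APSP algorithms, verify that each over-estimates within its claimed factor, and set the internal accuracy parameter (for instance $\epsilon'=\epsilon/2$) so that the composed ratio lands exactly on the stated value. The randomized $O(\log n)$ row needs one extra sentence: its distance estimates hold with high probability, so the resulting approximation guarantee is asserted with high probability as well. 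Finally I would reconcile this with the earlier remark that a strictly faster $(2+\epsilon)$-approximation exists via \cite{BandyapadhyayIP22}, which is randomized and outside this deterministic plug-in framework and is hence stated separately.
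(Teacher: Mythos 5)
Your proposal is correct and takes essentially the same route as the paper: the paper's own (brief) justification is precisely to run \cref{alg:clique}, invoke \cref{lem:greedy:2approx} for exact distances and \cref{lem:greedy:2+eps} for one-sided $\alpha$-approximate distances (giving ratio $2\alpha$), and plug in the cited \clq distance-computation algorithms, with the greedy phase contributing the additive $O(k)$ rounds. Your extra care about the one-sided direction of the distance estimates, the internal accuracy parameter, and the with-high-probability caveat for the randomized case are elaborations of points the paper leaves implicit, not deviations from its argument.
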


\begin{table}[tb]
    \caption{Approximation times in the \clq model, for constant $0<\epsilon<1$}
    \label{tab:kc times}
    \centering
    \begin{tabular}{ccccc} \hline
         Approx. ratio  &Time&Weighted?&Deter.?&  Ref.\\ 
		\hline
         2  &$O(n^{1/3}+k)$&No&Yes&  \cite{Censor-HillelKK19} \& Thm.~\ref{thm:clique:alg}\\ 
          2 &$O(n^{0.158}+k)$&Yes&Yes&  \cite{Censor-HillelKK19} \& Thm.~\ref{thm:clique:alg}\\
          $2+\epsilon$ &$O(n^{0.158}+k)$&No&Yes&  \cite{Censor-HillelKK19} \& Thm.~\ref{thm:clique:alg}\\
 $2+\epsilon$& $O(\poly\log n)$& Yes& No&\cite{BandyapadhyayIP22}\\ 
          $4+\epsilon$ &$O(\log^4\log n+k)$& No&Yes&  \cite{DoryP22} \& Thm.~\ref{thm:clique:alg}\\ 
          $6+\epsilon$&$O(\log^2 n+k)$& Yes&Yes&  \cite{Censor-HillelDK21} \& Thm.~\ref{thm:clique:alg}\\ 
 $O(\log n)$ &$O(k)$& Yes& No& \cite{ChechikZ22} \& Thm.~\ref{thm:clique:alg}\\ \hline
    \end{tabular}
\end{table}

\subsubsection*{Acknowledgments} 
We thanks Morteza Monemizadeh for many valuable conversations on this project, Michal Dory for discussions regarding distance computation in the \clq model, and the anonymous reviewers of SIROCCO'24 for their helpful comments.

\bibliographystyle{abbrvnat}
\bibliography{kcenters-bib}

\begin{thebibliography}{30}
\providecommand{\natexlab}[1]{#1}
\providecommand{\url}[1]{\texttt{#1}}
\expandafter\ifx\csname urlstyle\endcsname\relax
  \providecommand{\doi}[1]{doi: #1}\else
  \providecommand{\doi}{doi: \begingroup \urlstyle{rm}\Url}\fi

\bibitem[Abboud et~al.(2021)Abboud, Censor{-}Hillel, Khoury, and
  Paz]{AbboudCKP21}
A.~Abboud, K.~Censor{-}Hillel, S.~Khoury, and A.~Paz.
\newblock Smaller cuts, higher lower bounds.
\newblock \emph{{ACM} Trans. Algorithms}, 17\penalty0 (4):\penalty0
  30:1--30:40, 2021.
\newblock \doi{10.1145/3469834}.
\newblock URL \url{https://doi.org/10.1145/3469834}.

\bibitem[Bandyapadhyay et~al.(2022)Bandyapadhyay, Inamdar, Pai, and
  Pemmaraju]{BandyapadhyayIP22}
S.~Bandyapadhyay, T.~Inamdar, S.~Pai, and S.~V. Pemmaraju.
\newblock Near-optimal clustering in the \emph{k}-machine model.
\newblock \emph{Theor. Comput. Sci.}, 899:\penalty0 80--97, 2022.

\bibitem[Berns et~al.(2012)Berns, Hegeman, and Pemmaraju]{BernsHP12}
A.~Berns, J.~Hegeman, and S.~V. Pemmaraju.
\newblock Super-fast distributed algorithms for metric facility location.
\newblock In \emph{{ICALP} {(2)}}, volume 7392 of \emph{Lecture Notes in
  Computer Science}, pages 428--439. Springer, 2012.

\bibitem[Briest et~al.(2011)Briest, Degener, Kempkes, Kling, and
  Pietrzyk]{BriestDKKP11}
P.~Briest, B.~Degener, B.~Kempkes, P.~Kling, and P.~Pietrzyk.
\newblock A distributed approximation algorithm for the metric uncapacitated
  facility location problem in the congest model.
\newblock \emph{CoRR}, abs/1105.1248, 2011.

\bibitem[Censor{-}Hillel et~al.(2019)Censor{-}Hillel, Kaski, Korhonen, Lenzen,
  Paz, and Suomela]{Censor-HillelKK19}
K.~Censor{-}Hillel, P.~Kaski, J.~H. Korhonen, C.~Lenzen, A.~Paz, and
  J.~Suomela.
\newblock Algebraic methods in the congested clique.
\newblock \emph{Distributed Comput.}, 32\penalty0 (6):\penalty0 461--478, 2019.

\bibitem[Censor{-}Hillel et~al.(2021)Censor{-}Hillel, Dory, Korhonen, and
  Leitersdorf]{Censor-HillelDK21}
K.~Censor{-}Hillel, M.~Dory, J.~H. Korhonen, and D.~Leitersdorf.
\newblock Fast approximate shortest paths in the congested clique.
\newblock \emph{Distributed Comput.}, 34\penalty0 (6):\penalty0 463--487, 2021.

\bibitem[Chechik and Zhang(2022)]{ChechikZ22}
S.~Chechik and T.~Zhang.
\newblock Constant-round near-optimal spanners in congested clique.
\newblock In \emph{{PODC}}, pages 325--334. {ACM}, 2022.

\bibitem[Chiplunkar et~al.(2020)Chiplunkar, Kale, and
  Ramamoorthy]{ChiplunkarKR20}
A.~Chiplunkar, S.~S. Kale, and S.~N. Ramamoorthy.
\newblock How to solve fair k-center in massive data models.
\newblock In \emph{{ICML}}, volume 119 of \emph{Proceedings of Machine Learning
  Research}, pages 1877--1886. {PMLR}, 2020.

\bibitem[Cruciani et~al.(2024)Cruciani, Forster, Goranci, Nazari, and
  Skarlatos]{CrucianiFGNS24}
E.~Cruciani, S.~Forster, G.~Goranci, Y.~Nazari, and A.~Skarlatos.
\newblock Dynamic algorithms for k-center on graphs.
\newblock In \emph{Proceedings of the 2024 Annual ACM-SIAM Symposium on
  Discrete Algorithms (SODA)}, pages 3441--3462. SIAM, 2024.

\bibitem[Das~Sarma et~al.(2012)Das~Sarma, Holzer, Kor, Korman, Nanongkai,
  Pandurangan, Peleg, and Wattenhofer]{SarmaHKKNPPW12}
A.~Das~Sarma, S.~Holzer, L.~Kor, A.~Korman, D.~Nanongkai, G.~Pandurangan,
  D.~Peleg, and R.~Wattenhofer.
\newblock Distributed verification and hardness of distributed approximation.
\newblock \emph{{SIAM} J. Comput.}, 41\penalty0 (5):\penalty0 1235--1265, 2012.

\bibitem[Dolev et~al.(2012)Dolev, Lenzen, and Peled]{DolevLP12}
D.~Dolev, C.~Lenzen, and S.~Peled.
\newblock "tri, tri again": Finding triangles and small subgraphs in a
  distributed setting - (extended abstract).
\newblock In \emph{{DISC}}, volume 7611 of \emph{Lecture Notes in Computer
  Science}, pages 195--209. Springer, 2012.

\bibitem[Dory and Parter(2022)]{DoryP22}
M.~Dory and M.~Parter.
\newblock Exponentially faster shortest paths in the congested clique.
\newblock \emph{J. {ACM}}, 69\penalty0 (4):\penalty0 29:1--29:42, 2022.

\bibitem[Drucker et~al.(2014)Drucker, Kuhn, and Oshman]{DruckerKO13}
A.~Drucker, F.~Kuhn, and R.~Oshman.
\newblock On the power of the congested clique model.
\newblock In \emph{{PODC}}, pages 367--376. {ACM}, 2014.

\bibitem[Gehweiler et~al.(2014)Gehweiler, Lammersen, and Sohler]{GehweilerLS14}
J.~Gehweiler, C.~Lammersen, and C.~Sohler.
\newblock A distributed o(1)-approximation algorithm for the uniform facility
  location problem.
\newblock \emph{Algorithmica}, 68\penalty0 (3):\penalty0 643--670, 2014.

\bibitem[Gonzalez(1985)]{DBLP:journals/tcs/Gonzalez85}
T.~F. Gonzalez.
\newblock Clustering to minimize the maximum intercluster distance.
\newblock \emph{Theor. Comput. Sci.}, 38:\penalty0 293--306, 1985.
\newblock \doi{10.1016/0304-3975(85)90224-5}.
\newblock URL \url{https://doi.org/10.1016/0304-3975(85)90224-5}.

\bibitem[Hegeman and Pemmaraju(2015)]{HegemanP15}
J.~W. Hegeman and S.~V. Pemmaraju.
\newblock Sub-logarithmic distributed algorithms for metric facility location.
\newblock \emph{Distributed Comput.}, 28\penalty0 (5):\penalty0 351--374, 2015.

\bibitem[Holzer and Wattenhofer(2012)]{HolzerW12}
S.~Holzer and R.~Wattenhofer.
\newblock Optimal distributed all pairs shortest paths and applications.
\newblock In \emph{{PODC}}, pages 355--364. {ACM}, 2012.

\bibitem[Inamdar et~al.(2018)Inamdar, Pai, and Pemmaraju]{InamdarPP18}
T.~Inamdar, S.~Pai, and S.~V. Pemmaraju.
\newblock Large-scale distributed algorithms for facility location with
  outliers.
\newblock In \emph{{OPODIS}}, volume 125 of \emph{LIPIcs}, pages 5:1--5:16.
  Schloss Dagstuhl - Leibniz-Zentrum f{\"{u}}r Informatik, 2018.

\bibitem[Kushilevitz and Nisan(1997)]{KushilevitzN:book96}
E.~Kushilevitz and N.~Nisan.
\newblock \emph{Communication Complexity}.
\newblock Cambridge University Press, New York, NY, USA, 1997.
\newblock ISBN 0-521-56067-5.

\bibitem[Lenzen(2013)]{Lenzen13}
C.~Lenzen.
\newblock Optimal deterministic routing and sorting on the congested clique.
\newblock In \emph{Proc.\ 32nd ACM Symposium on Principles of Distributed
  Computing (PODC 2013)}, pages 42--50, 2013.
\newblock \doi{10.1145/2484239.2501983}.

\bibitem[Lenzen et~al.(2019)Lenzen, Patt{-}Shamir, and Peleg]{LenzenPP19}
C.~Lenzen, B.~Patt{-}Shamir, and D.~Peleg.
\newblock Distributed distance computation and routing with small messages.
\newblock \emph{Distributed Comput.}, 32\penalty0 (2):\penalty0 133--157, 2019.

\bibitem[Linial(1992)]{Linial92}
N.~Linial.
\newblock Locality in distributed graph algorithms.
\newblock \emph{{SIAM} J. Comput.}, 21\penalty0 (1):\penalty0 193--201, 1992.

\bibitem[Lotker et~al.(2003)Lotker, Pavlov, Patt{-}Shamir, and
  Peleg]{LotkerPPP03}
Z.~Lotker, E.~Pavlov, B.~Patt{-}Shamir, and D.~Peleg.
\newblock {MST} construction in o(log log n) communication rounds.
\newblock In \emph{{SPAA}}, pages 94--100. {ACM}, 2003.

\bibitem[Mettu and Plaxton(2003)]{MettuP03}
R.~R. Mettu and C.~G. Plaxton.
\newblock The online median problem.
\newblock \emph{{SIAM} J. Comput.}, 32\penalty0 (3):\penalty0 816--832, 2003.

\bibitem[Moscibroda and Wattenhofer(2005)]{MoscibrodaW05}
T.~Moscibroda and R.~Wattenhofer.
\newblock Facility location: distributed approximation.
\newblock In \emph{{PODC}}, pages 108--117. {ACM}, 2005.

\bibitem[Pandit and Pemmaraju(2009)]{PanditP09}
S.~Pandit and S.~V. Pemmaraju.
\newblock Return of the primal-dual: distributed metric facility location.
\newblock In \emph{{PODC}}, pages 180--189. {ACM}, 2009.

\bibitem[Peleg(2000)]{Peleg00}
D.~Peleg.
\newblock \emph{Distributed Computing: A Locality-sensitive Approach}.
\newblock Society for Industrial and Applied Mathematics, 2000.
\newblock ISBN 0-89871-464-8.

\bibitem[Peleg and Rubinovich(2000)]{PelegR00}
D.~Peleg and V.~Rubinovich.
\newblock Near-tight lower bound on the time complexity of distributed {MST}
  construction.
\newblock \emph{SIAM Journal on Computing}, 30\penalty0 (5):\penalty0
  1427--1442, 2000.
\newblock \doi{10.1137/S0097539700369740}.

\bibitem[Pietrzyk(2013)]{Pietrzyk13}
P.~Pietrzyk.
\newblock \emph{Local and online algorithms for facility location}.
\newblock PhD thesis, University of Paderborn, 2013.

\bibitem[Rao and Yehudayoff(2020)]{RaoY20}
A.~Rao and A.~Yehudayoff.
\newblock \emph{Communication Complexity: and Applications}.
\newblock Cambridge University Press, 2020.
\newblock ISBN 9781108497985.

\end{thebibliography}

\end{document}